\theoremstyle{plain}
\newtheorem{theorem}{Theorem}
\newtheorem{lemma}[theorem]{Lemma}
\newtheorem{corollary}[theorem]{Corollary}
\theoremstyle{definition}
\newtheorem{definition}[theorem]{Definition}
\newtheorem*{remark}{Remark}
\algrenewcommand{\Require}{\item[\textbf{Input:}]}
\algrenewcommand{\Ensure}{\item[\textbf{Output:}]}
\newcommand{\tildO}{\tilde{O}}
\newcommand{\Romnum}[1]{\uppercase\expandafter{\romannumeral #1}}
\DeclareMathOperator{\tr}{Tr} 
\DeclareMathOperator{\negl}{negl} 
\DeclareMathOperator{\poly}{poly}
\DeclareMathOperator{\entpy}{H}
\let\hom\relax
\DeclareMathOperator{\hom}{Hom}
\DeclareMathOperator{\qft}{F}
\DeclareMathOperator{\E}{\mathbb{E}}
\DeclarePairedDelimiter{\abs}{\lvert}{\rvert}
\let\ket\relax
\DeclarePairedDelimiter{\ket}{\lvert}{\rangle}
\let\bra\relax
\DeclarePairedDelimiter{\bra}{\langle}{\rvert}
\DeclarePairedDelimiter{\lrang}{\langle}{\rangle}
\DeclarePairedDelimiter{\opnorm}{\lVert}{\rVert}
\def\C{\mathbb{C}}
\def\N{\mathbb{N}}
\def\R{\mathbb{R}}
\def\Z{\mathbb{Z}}
\def\lwe{\mathsf{LWE}}
\def\edcp{\mathsf{EDCP}}
\def\gen{\mathsf{Gen}}
\def\enc{\mathsf{Enc}}
\def\dec{\mathsf{Dec}}
\def\X{\mathcal{X}}
\def\SX{\mathcal{S(X)}}
\def\U{\mathcal{U}}
\def\PosSemi{\mathrm{Pos}}
\title{Efficient Quantum Public-Key Encryption \\ From Learning With Errors}
\author{
	Javad Doliskani\thanks{Department of Computer Science, Ryerson University,
	(\tt{javad.doliskani@ryerson.ca}).}
}
\date{}
\begin{document}
\maketitle

\begin{abstract}
    Our main result is a quantum public-key encryption scheme based on the Extrapolated Dihedral Coset problem (EDCP) which is equivalent, under quantum polynomial-time reductions, to the Learning With Errors (LWE) problem. For limited number of public keys (roughly linear in the security parameter), the proposed scheme is information-theoretically secure. For polynomial number of public keys, breaking the scheme is as hard as solving the LWE problem. The public keys in our scheme are quantum states of size $\tildO(n)$ qubits. The key generation and decryption algorithms require $\tildO(n)$ qubit operations while the encryption algorithm takes $O(1)$ qubit operations.
\end{abstract}

\newpage

\section{Introduction}
\label{sec:intro}

The application of quantum information in cryptography goes back to the work of Wiesner \cite{wiesner1983conjugate} who proposed the first quantum cryptographic tool called \textit{Conjugate Coding}. Remarkably, the idea of conjugate coding is still used in different forms in many modern protocols in quantum cryptography. Quantum cryptography, however, gained much attraction after the introduction of Quantum Key Distribution (QKD) by Bennett and Brassard \cite{bennett1983quantum, bennett1984quantum}. It was later proved by Lo and Chau \cite{lo1999unconditional} and Mayers \cite{mayers2001unconditional} that QKD is information-theoretically secure. A more accessible proof of security, based on error correcting codes, was  given by Shor and Preskill \cite{shor2000simple}.

Although in theory QKD provides perfect security, its real-world implementations are not (and perhaps will not be) ideal. That means the QKD implementations, like other cryptographic implementations, are vulnerable to side-channel attacks, e.g., see \cite{lydersen2010hacking}. Even if we assume that QKD provides perfect security in practice, there are many other important cryptographic tasks, such as bit commitment, multiparty computation and oblivious transfer, that are not addressed by key distribution. In fact, it was proved by Mayers \cite{mayers1997unconditionally} and Lo and Chau \cite{lo1997quantum} that unconditionally secure quantum bit commitment is impossible. The impossibility of information-theoretically secure two-party computation using quantum communication was also later proved by Colbeck \cite{colbeck2007impossibility}. Such schemes can be made secure if the adversary is assumed to have bounded computational power or limited storage. Computational assumption are therefore still needed and are important in quantum cryptography. In particular, the necessity of computational assumptions in quantum \textit{public-key cryptography}, which is an increasingly important area in quantum cryptography, needs to be further investigated.

The principles of quantum public-key cryptography are close adaptations of those of classical public-key cryptography. In a quantum public-key scheme, a pair of keys $(sk_A, pk_A)$ is associated to each user $A$, the secret key $sk_A$ which is only known to $A$, and the public key $pk_A$ which is published by $A$ and it can be accessed by everyone. The pair of keys are generated by an efficient key generation algorithm. Like classical public-key schemes, quantum public-key schemes are modelled based on trapdoor one-way functions. Informally, a one-way function is a function that is easy to compute but hard to invert. A trapdoor one-way function is a one-way function $f$ to which some information $k$, called the trapdoor, can be associated in a way that anyone with the knowledge of $k$ can easily invert $f$ \cite{boneh2015graduate}. In the quantum setting, $f$ is a mapping $\ket{\alpha} \mapsto \ket{f_\alpha}$ from the space of secret keys to the space of public keys. The secret key $\ket{\alpha}$ can be a classical or quantum state, and the public key $\ket{f_\alpha}$ is a quantum state. 

The three main constructions in quantum public-key cryptography are public-key encryption, digital signature and public-key money. In this work, we focus on quantum public-key encryption. We refer the reader to \cite{gottesman2001quantum} for quantum digital signatures, and \cite{aaronson2009quantum, aaronson2012quantum, farhi2012quantum} for quantum money. In a public-key encryption scheme, user $B$ can send a secret message $m$ to $A$ by encoding $m$ into a ciphertext $c$ using $A$'s public key $pk_A$ and a public encryption algorithm. Upon receiving the ciphertext $c$, user $A$ decrypts $c$ using her private key $sk_A$ and a public decryption algorithm.

\subsection{Previous constructions}

Kawachi \textit{et al.}~\cite{kawachi2005computational, kawachi2012computational} proposed a quantum public-key encryption scheme based on the Graph Isomorphism (GI) problem. For a security parameter $n$, they consider quantum states over the symmetric group $S_n$. The (secret key, public key) pair in the proposed scheme is $(sk, pk) = (s, \ket{\psi_s})$ where $s \in S_n$ is such that $s^2 = 1$, and $\ket{\psi_s} = (\ket{t} + \ket{ts}) / \sqrt{2}$ for a random $t \in S_n$. A single bit $b$ of classical information is then encrypted as $(\ket{t} + (-1)^b\ket{ts}) / \sqrt{2}$. They prove their scheme is secure assuming the hardness of the GI problem. However, GI is not considered as a standard quantum-secure assumption, especially after the recent breakthrough by L{\'a}szl{\'o} Babai \cite{babai2016graph} which shows that GI can be solved classically in quasipolynomial time.

An encryption scheme based on rotations of $1$-qubit states was proposed by Nikolopoulos \cite{nikolopoulos2008applications}. Their scheme is based on the trapdoor function $s \mapsto \ket{\psi_s(\theta_n)}$ where $0 \le s < 2^n$ is an integer, $\theta_n = \pi / 2^{n - 1}$ and $\ket{\psi_s(\theta_n)} = \cos(s\theta_n /2)\ket{0} + \sin(s\theta_n / 2)\ket{1}$. The encryption algorithm, however, was deterministic and could be broken by a simple attack proposed in \cite{nikolopoulos2009deterministic}. A randomized version of the encryption algorithm was proposed, in the same paper, to minimize the success probability of the aforementioned attack. To encrypt a classical message, the randomized algorithm uses significantly larger public-key states than the original algorithm.

\subsection{Overview and results}

\paragraph{Extrapolated dihedral cosets.}
Let $q$ be a positive integer. A dihedral coset over the additive group $\Z_q$ is a quantum state defined as $\ket{\psi_x} := (\ket{0}\ket{x} + \ket{1}\ket{x + s}) / \sqrt{2}$ where $x \in \Z_q$ is random and $s \in \Z_q$ is random and fixed. The Dihedral Coset Problem (DCP) is the problem of recovering (the secret) $s$, given dihedral cosets states $\ket{\psi_x}$ for different random values of $x$. DCP arises naturally when one uses the ``standard method'' to solve the hidden shift problem over the group $\Z_q$ \cite{childs2010quantum}. More precisely, the states $\ket{\psi_x}$ are obtained by performing a measurement on a superposition over the dihedral group $D_q = \Z_q \rtimes \Z_2$. Regev \cite{regev2004quantum} gave the first connection between DCP and standard lattice problems. Informally, if the number of states in DCP is bounded by $\poly(\log q)$, then DCP is (quantumly) at least as hard as $\poly(\log q)$-unique-SVP. 

In a recent work, Brakerski \textit{et al.}~\cite{brakerski2018learning} proved that the Learning With Errors ($\lwe$) problem is equivalent, under quantum polynomial-time reductions, to an extended version of DCP, which they call the Extrapolated Dihedral Coset Problem ($\edcp$). The $\lwe$ problem was proposed by Regev \cite{regev2005lattices, regev2009lattices}, and it has been used as the underlying security assumption in a large number of cryptographic schemes. For parameters $n, q \in \Z$ and $\alpha \in (0, 1)$, $\lwe_{n, q, \alpha}$ is the problem of recovering $\bm{s} \in \Z_q^n$ from samples of the form $(\bm{a}, \lrang{\bm{a}, \bm{s}} + e)$ where $\bm{a} \in \Z_q^n$ is uniformly random and $e$ is sampled from the discrete Gaussian distribution $\mathcal{D}_{\Z, \alpha q}$ of standard deviation $\alpha q$. The (uniform) $\edcp$ is a generalization of DCP where coset states are over the group $\Z_q^n$ and the number of terms in the coset states is parameterized by an integer $2 \le r \le q$. An $\edcp$ state is written as
\begin{equation}
    \label{equ:edcp-state-intro}
    \frac{1}{\sqrt{r}} \sum_{j = 0}^{r - 1}\ket{j}\ket{\bm{x} + j\bm{s}},
\end{equation}
where $\bm{s}$ is fixed and $\bm{x} \in \Z_q^n$ is uniformly random. The $\edcp_{n, q, r}$ is then to recover $\bm{s}$, given states of the form \eqref{equ:edcp-state-intro}. In the $\edcp$ and $\lwe$ equivalence given in \cite{brakerski2018learning}, the parameter $r$ is proportional to $1 / \alpha$. Intuitively, one would expect that $\edcp$ becomes easier for larger $r$, since for smaller $\alpha$ the $\lwe$ problem becomes easier. In fact, we show that $\edcp_{n, q, r}$ for any $r$ can be efficiently reduced to $\edcp_{n, q, 2}$ (Lemma \ref{lem:self-rd}). On the other hand, in the extreme case $r = q$, $\edcp$ can be solved in polynomial time (Remark \ref{rmk:edcp-extreme}). 

\paragraph{A search-to-decision reduction for $\edcp$.}
The decision-$\edcp$, as was defined in \cite{brakerski2018learning}, is the problem of distinguishing between states of the form \eqref{equ:edcp-state-intro} and the same number of states of the form $\ket{j}\ket{\bm{x}}$ for uniformly random $(j, \bm{x}) \in \Z_r \times \Z_q^n$. It was pointed out in \cite{brakerski2018learning} that there is a polynomial time search-to-decision reduction for $\edcp$ via $\lwe$. More precisely, search-$\edcp$ can be reduced to decision-$\edcp$ via the following sequence of polynomial-time reductions
\begin{equation}
    \label{equ:redu-seq}
    \text{search-}\edcp \le \text{search-}\lwe \le \text{decision-}\lwe \le \text{decision-}\edcp.
\end{equation}
In Section \ref{sec:old-search-dec}, we give a direct reduction from search-$\edcp$ to decision-$\edcp$ (Theorem \ref{thm:old-search-decision}) which works for a large class of moduli $q$.

\paragraph{A new decision-$\edcp$.}
The original decision-$\edcp$, explained above, is obtained from decision-$\lwe$ by following the same procedure used to establish the equivalence between search-$\lwe$ and search-$\edcp$. For such a decision problem for $\edcp$, however, there is no known general reduction from the search-$\edcp$; The search-to-decision reductions are either obtained using the sequence \eqref{equ:redu-seq} or using Theorem \ref{thm:old-search-decision}. In the former case, one has to rely on the search-to-decision reductions for $\lwe$, e.g., \cite{regev2009lattices, brakerski2013classical, micciancio2012trapdoors}, which are not general enough in the sense that they either incur non-negligible loss in $\lwe$ parameters, or work only for special forms of the modulus $q$. In the latter case, the parameter $r$ has to satisfy some constraints with respect to $q$.

One could argue that the equivalence between the search and decision problems for $\edcp$ is of less importance due to the recent result of Peikert \textit{et al.} \cite{peikert2017pseudorandomness} who proved that there is a polynomial-time quantum reduction from standard lattice problems directly to decision-$\lwe$. Their proof works for any modulus $q$. In any case, another issue is that it is not clear to us how to base \textit{efficient} cryptographic primitives on the original decision-$\edcp$, see the discussion in Section \ref{sec:public-key-enc}. To resolve these issues, we propose a new decision problem for $\edcp$. Informally, the new decision problem asks to distinguish between states of the form \eqref{equ:edcp-state-intro} and states of the form 
\[ \frac{1}{\sqrt{r}} \sum_{j = 0}^{r - 1} \omega_p^{jt} \ket{j}\ket{\bm{x} + j\bm{s}}, \]
where $p$ is a prime divisor of $q$ and $t \in \Z_p {\setminus} \{0\}$ is uniformly random. This new decision problem allows us to accomplish the following:
\begin{itemize}
\item We prove that for any modulus $q$ with $\poly(n)$-bounded prime factors, there is a quantum polynomial-time reduction from solving search-$\edcp$ to solving decision-$\edcp$.
\item We build an efficient quantum public-key encryption scheme based on decision-$\edcp$.
\end{itemize}
 
\paragraph{Quantum public-key cryptosystem.}
In Section \ref{sec:public-key-enc}, we build a quantum public-key encryption scheme from $\edcp$. The idea behind the encryption is very simple. The public key is a single $\edcp$ state as in \eqref{equ:edcp-state-intro}. To encrypt a classical bit $b \in \{0,1\}$, a unitary transform is applied to the public key to encode the value $bt$ into the phase, where $t \in \Z_p$ is uniformly random and nonzero. The ciphertext is the state
\[ \frac{1}{\sqrt{r}} \sum_{j = 0}^{r - 1} \omega_p^{jbt} \ket{j}\ket{\bm{x} + j\bm{s}}. \]
Here, $r$ and $p$ are divisors of $q$, and $r = p^{s'}$ for some integer $s' \ge 1$. The decryption algorithm is essentially a scalar multiplication over $\Z_q^n$ and an application of the quantum Fourier transform $\qft_r$ over $\Z_r$. This scheme is very efficient in terms of both computation and public key size. An even more efficient instantiation of this scheme is obtained by setting $p = 2$, $r = 2^{s'}$, and $q = 2^s$ such that $q = \poly(n)$ and $s' \ll s$. In this case, the size of the public key is $\tildO(n)$ qubits. The encryption algorithm takes $O(1)$ qubit operations, and the key generation and decryption algorithms take $\tildO(n)$ qubit operations.

\paragraph{Security beyond Holevo's bound.} 
A fundamental difference between classical and quantum public-key cryptosystems is that in classical systems the public key can be copied arbitrarily many times, while in quantum systems making even two copies of the same public key is generally impossible. This is a consequence of the \textit{no-cloning} theorem. Also, in the quantum setting, the amount of information one can extract from a public key is bounded by a certain quantity which depends on the parameters of the system. This is a consequence of Holevo's theorem (Theorem \ref{thm:holevo}) which says that the accessible information of an ensemble is bounded by the $\chi$ quantity of the ensemble. Therefore, according to Holevo's theorem, the secret key cannot be recovered as long as the number of copies of the public key stays below Holevo's bound.

A feature common to all previous quantum public-key systems is that they rely on Holevo's bound for information-theoretic security, but do not provide a security proof based on a standard assumption beyond Holevo's bound. This puts a sever limitation on the total number of public keys published at any time. In Section \ref{sec:hardness}, we prove that our cryptosystem is information-theoretically secure when the number of public keys is bounded by $O(n\log q / \log r)$. Beyond that, when the number of public keys is $\poly(n)$, breaking the scheme is as hard as solving $\lwe$.


\section{Preliminaries}
\label{sec:preli}

\subsection{Quantum Computation}

Our notations for quantum information mostly follow those of \cite{watrous2018theory}. The classical state of a register $\mathsf{X}$ is represented by a finite alphabet, say $\Sigma$. If the registers $\mathsf{X}_1, \dots, \mathsf{X}_n$ are represented by alphabets $\Sigma_1, \dots, \Sigma_n$ then the classical state of the tuple $(\mathsf{X}_1, \dots, \mathsf{X}_n)$ is represented by $\Sigma_1 \times \cdots \times \Sigma_n$. The complex Euclidean space associated with the register $\mathsf{X}$ is denoted by $\C^\Sigma$. For a complex Euclidean space $\X$, denote the unit sphere in $\X$ by $\SX$. A linear operator $\rho$ acting on $\X$ is called a density operator if $\rho$ is positive semidefinite with trace equal to $1$. The quantum state of the register $\mathsf{X}$ is represented by the set of density operators $\mathrm{D}(\X)$.

We will use the Dirac notation for the elements of $\SX$. In particular, we denote the column vector $x \in \SX$ by $\ket{x}$ and the row vector $x^*$ by $\bra{x}$. A state $\rho$ is called pure if it can be written as $\rho = \ket{x}\bra{x}$, in which case we will simply write the state as $\ket{x}$. By the spectral theorem, every state $\rho$ is a linear combination of pure states. Therefore, a quantum state can also be represented by a linear combination
\[ \sum_{x} \alpha_x \ket{x}, \hspace*{1mm} \sum_{x} \abs{\alpha_x}^2 = 1. \]
We shall alternate between these equivalent representations of quantum states throughout this paper. The density operator representation is particularly useful when the underlying quantum state is not completely known. For example, if we only know that the system is in the state $\ket{\psi_x}$ with probability $p_x$ then the state of the system is described by the density operator
\[ \rho = \sum_{x} p_x \ket{\psi_x}\bra{\psi_x} = \E_x \Big[ \ket{\psi_x} \bra{\psi_x} \Big]. \]
More, generally, the density operator corresponding to a probability distribution $\gamma: \SX \rightarrow [0, 1]$ is defined as
\[ \rho_\gamma = \int_{\ket{\phi} \in \SX} \ket{\phi}\bra{\phi} d\gamma(\ket{\phi}) = \E_{\ket{\phi} \in \gamma} \Big[ \ket{\phi}\bra{\phi} \Big]. \]
For quantum public-key cryptography we will need a formal notion of quantum state discrimination. In particular, we need to formally define the notion of computational (in)distinguishability of quantum states. For our purposes, it is more convenient to define computational distinguishability for probability distributions over quantum states. The following is adapted from \cite[3.3]{watrous2009zero}.

\begin{definition}
    Let $\X$ be a complex Euclidean space, and let $\gamma, \mu: \SX \rightarrow [0, 1]$ be probability distributions. Then $\gamma$ is said to be $(s, \epsilon)$-distinguishable from $\mu$ if there is a quantum measurement circuit $Q$ of size $s$ such that
    \[ \abs[\Big]{ \Pr_{\rho \in \gamma}[Q(\rho) = 1] - \Pr_{\rho \in \mu}[Q(\rho) = 1]} \ge \epsilon. \]
\end{definition} 

Two distributions $\gamma, \mu$ are $(s, \epsilon)$-indistinguishable if they are not $(s, \epsilon)$-distinguishable.

\begin{definition}
    For each $n \in \N$, let $\X_n$ be a complex Euclidean space and let $\gamma_n, \mu_n: \mathcal{S}(\X_n) \rightarrow [0, 1]$ be probability distributions. Then the two ensembles $\{ \gamma_n \}_{n \in \N}$ and $\{ \mu_n \}_{n \in \N}$ are said to be polynomially quantum indistinguishable if for all polynomially bounded functions $s, p: \N \rightarrow \N$, the distributions $\gamma_n$ and $\mu_n$ are $(s(n), 1 / p(n))$-indistinguishable for almost all $n \in \N$.
\end{definition}

Two ensembles are called quantum computationally indistinguishable if they are polynomially quantum indistinguishable. The advantage of a polynomial-time quantum algorithm $Q$ in distinguishing between the distributions $\gamma_n$ and $\mu_n$ is defined as
\[ \delta_Q(\gamma_n, \mu_n) = \abs[\Big]{ \Pr_{\rho \in \gamma_n}[Q(\rho) = 1] - \Pr_{\rho \in \mu_n}[Q(\rho) = 1] }. \]
Two ensembles $\{ \gamma_n \}$ and $\{ \mu_n \}$ are then quantum computationally indistinguishable if $\delta_Q(\gamma_n, \mu_n) = \negl(n)$ for all such $Q$ and almost all $n$.

\subsection{Error reduction}
\label{sec:err-red}

We can abstractly define the advantage of an algorithm $A$, regardless of $A$ being quantum or classical, in distinguishing between two probability distribution $P_1$ and $P_2$ as
\[ \delta_A(P_1, P_2) = \abs[\Big]{ \Pr_{x \in P_1}[A(x) = 1] - \Pr_{x \in P_2}[A(x) = 1] }. \]
Two ensembles of distributions $\{ P_{1, n} \}$ and $\{ P_{2, n} \}$ are said to be polynomial-time indistinguishable if for any polynomial-time algorithm $A$ and any $\poly(n)$-bounded function $p$ we have $\delta_A(P_{1, n}, P_{2, n}) \le 1 / p(n)$ for large enough $n$. The following lemma follows from the triangle inequality.

\begin{lemma}[Hybrid lemma]
    \label{lem:hybrid}
    Let $P_1, \dots, P_k$ be a sequence of probability distributions. Assume that $\delta_A(P_1, P_k) \ge \epsilon$ for some polynomial-time algorithm $A$. Then $\delta_A(P_i, P_{i + 1}) \ge \epsilon / k$ for some $1 \le i < k$.
\end{lemma}

Suppose an algorithm $A$ can distinguish between two distributions $P_1$ and $P_2$ with non-negligible advantage. A common technique to amplify the distinguishing advantage of $A$ is to sample enough times from the input distribution and then decide based on majority. A brief description of this technique, which we shall use several times in this paper, is as follows. First, we need the following well-known tail inequality.

\begin{lemma}[Hoeffding \cite{hoeffding1963probability}]
    \label{lem:hoeffding}
    Let $X_1, \dots, X_n$ be independent random variables with $X_i \in [a_i, b_i]$, and let $S = X_1, \cdots + X_n$. Then
    \begin{align*}
        \Pr[S - \E[S] \ge t] & \le e^{-2t^2 / \sum_i^n (b_i - a_i)^2}, \text{ and} \\
        \Pr[S - \E[S] \le -t] & \le e^{-2t^2 / \sum_i^n (b_i - a_i)^2}.
    \end{align*}
\end{lemma}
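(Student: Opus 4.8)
The plan is to prove this by the exponential moment (Chernoff--Cram\'er) method. First I would fix $\lambda > 0$ and apply Markov's inequality to the nonnegative random variable $e^{\lambda(S - \E[S])}$, which gives
\[ \Pr[S - \E[S] \ge t] \le e^{-\lambda t}\, \E\bigl[ e^{\lambda(S - \E[S])} \bigr]. \]
Since the $X_i$ are independent, the moment generating function factorizes: writing $Y_i = X_i - \E[X_i]$, we get $\E[e^{\lambda(S - \E[S])}] = \prod_{i=1}^n \E[e^{\lambda Y_i}]$, where each $Y_i$ has mean zero and takes values in an interval of length $b_i - a_i$.

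The key step — and the main technical obstacle — is the classical Hoeffding lemma: if $Y$ is a random variable with $\E[Y] = 0$ and $Y \in [a,b]$, then $\E[e^{\lambda Y}] \le e^{\lambda^2 (b-a)^2 / 8}$. I would prove this by using convexity of $u \mapsto e^{\lambda u}$ to bound $e^{\lambda Y} \le \frac{b - Y}{b - a} e^{\lambda a} + \frac{Y - a}{b - a} e^{\lambda b}$ pointwise, taking expectations (the linear term vanishes because $\E[Y] = 0$), and then writing the resulting upper bound as $e^{\varphi(\lambda(b-a))}$ for a suitable function $\varphi$. A short computation shows $\varphi(0) = \varphi'(0) = 0$ and $\varphi''(\cdot) \le 1/4$, so Taylor's theorem with remainder yields $\varphi(u) \le u^2/8$, which is the claimed bound. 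This is where all the real work sits; everything else is bookkeeping.

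Combining the two steps gives
\[ \Pr[S - \E[S] \ge t] \le e^{-\lambda t} \prod_{i=1}^n e^{\lambda^2 (b_i - a_i)^2 / 8} = \exp\Bigl( -\lambda t + \tfrac{\lambda^2}{8}\textstyle\sum_{i=1}^n (b_i - a_i)^2 \Bigr). \]
Finally I would optimize the exponent over $\lambda > 0$, choosing $\lambda = 4t / \sum_{i=1}^n (b_i - a_i)^2$, which makes the exponent equal to $-2t^2 / \sum_{i=1}^n (b_i - a_i)^2$ and establishes the upper-tail bound. The lower-tail bound follows immediately by applying the upper-tail bound to the variables $-X_1, \dots, -X_n$, which are independent, satisfy $-X_i \in [-b_i, -a_i]$ (an interval of the same length), and have $-S$ with mean $-\E[S]$.
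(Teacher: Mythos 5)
The paper does not prove this lemma at all --- it is imported verbatim from Hoeffding's 1963 paper as a known tail bound, so there is nothing to compare against. Your argument is the standard and correct one: Markov's inequality on the exponential moment, factorization by independence, Hoeffding's lemma $\E[e^{\lambda Y}] \le e^{\lambda^2(b-a)^2/8}$ for centered bounded $Y$ (your convexity-plus-Taylor derivation of this is right), optimization at $\lambda = 4t/\sum_i (b_i-a_i)^2$, and negation for the lower tail. No gaps.
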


Now, assume $\delta_A(P_1, P_2) \ge 1 / p(n)$ for some polynomial $p(n)$, and let $P$ be the input distribution. Draw $m = 2np(n)^2$ samples from $P$, and let $X_i$ be a random variable representing the output of $A$ on input the $i$-th sample. Here, $X_i = 0$ (resp., $X_i = 1$) means $A$ has recognized the $i$-th sample to be from $P_1$ (resp., $P_2$). Let $S = X_1 + \cdots + X_m$. If $P = P_2$ then from the bound on $\delta_A$ we have $\E[S] \ge m / 2 + np(n)$. By Hoeffdings's inequality,
\begin{align*}
    \Pr\Big[ S \le \frac{1}{2} (m + np(n)) \Big]
    & = \Pr\Big[ S - \frac{m}{2} - np(n) \le -\frac{1}{2}np(n) \Big] \\
    & \le \Pr\Big[ S - \E[S] \le -\frac{1}{2}np(n) \Big] \\
    & \le e^{-n / 4}.
\end{align*}
Similarly, if $P = P_1$ then
\begin{align*}
    \Pr\Big[ S \ge \frac{1}{2} (m - np(n)) \Big]
    & \le \Pr\Big[ S - \E[S] \ge \frac{1}{2}np(n) \Big] \\
    & \le e^{-n / 4}.
\end{align*}
Therefore, by running $A$ on $m$ samples and counting the number of $1$'s we can tell, with probability exponentially close to $1$, whether $P = P_1$ or $P = P_2$.

\subsection{Learning With Errors}

In the following, we briefly review the Learning With Errors ($\lwe$) problem and the Extrapolated Dihedral Coset Problem ($\edcp$). In \cite{brakerski2018learning}, $\edcp$ refers to a general class of problems from which two specific instances are studied in detail: the $U$-$\edcp$ and $G$-$\edcp$ which are the \textit{Uniform} and \textit{Gaussian} $\edcp$, respectively. In this work, we will only study uniform-$\edcp$, and simply refer to it as $\edcp$.

Let $n \ge 1$ and $q = q(n) \ge 2$ be integers, and let $\chi$ be a probability distribution over $\Z$. For a random fixed $\bm{s} \in \Z_q^n$, denote by $A_{\bm{s}, \chi}$ the probability distribution over $\Z_q^n \times \Z_q$ defined as follows: choose $\bm{a} \in \Z_q^n$ uniformly at random, choose $e$ according to $\chi$ and output $(\bm{a}, \lrang{\bm{a}, \bm{s}} + e)$.

\begin{definition}[LWE, Search]
The search-$\lwe_{n, q, \chi}$ is the problem of recovering $\bm{s}$ given samples from the distribution $A_{\bm{s}, \chi}$. An algorithm $Q$ is said to solve $\lwe_{n, q, \chi}$ if $Q$  outputs $\bm{s}$ with probability at least $1 / \poly(n\log q)$ and has running time at most $\poly(n \log q)$.
\end{definition}

\begin{definition}[LWE, Decision]
    The decision-$\lwe_{n, q, \chi}$ problem is to distinguish between the distribution $A_{\bm{s}, \chi}$ and the uniform distribution over $\Z_q^n \times \Z_q$. An algorithm $Q$ is said to solve the desicion-$\lwe_{n, q, \chi}$ if it succeeds with advantage at least $1 / \poly(n\log q)$ and has running time at most $\poly(n\log q)$. 
\end{definition}

The distribution $\chi$ is called the error distribution and is usually chosen to be $\mathcal{D}_{\Z, \alpha q}$, the discrete Gaussian distribution centered around zero with standard deviation $\alpha q$. The parameter $\alpha \in (0, 1)$ is called the error rate. We sometimes write $\lwe_{n, q, \alpha}$ instead of $\lwe_{n, q, \chi}$ for simplicity. Let $n \ge 1$ and $q \ge 2$ be as above and let $r = r(n) < q$ be a positive integer. Let $\Sigma = \Z_r \times \Z_q^n$ and define the complex Euclidean space $\X = \C^\Sigma$. For a fixed uniformly random $\bm{s} \in \Z_q^n$ define the probability distribution $\mu_{\bm{s}, r}: \SX \rightarrow [0, 1]$ as follows: choose $\bm{x} \in \Z_q^n$ uniformly at random and output the state
\[ \ket{\phi_{\bm{s}, r}(\bm{x})} = \frac{1}{\sqrt{r}} \sum_{j = 0}^{r - 1}\ket{j}\ket{\bm{x} + j\bm{s}}. \]
If we only have access to the output of $\mu_{\bm{s}, r}$, e.g., if $\bm{x}$ is unknown, then the quantum system corresponding to the state $\ket{\phi_{\bm{s}, r}(\bm{x})}$ is described by the density operator
\[ \rho_{\bm{s}, r} = \frac{1}{q^n} \sum_{\bm{x} \in \Z_q^n} \ket{\phi_{\bm{s}, r}(\bm{x})} \bra{\phi_{\bm{s}, r}(\bm{x})} = \E_{\bm{x} \in \U(\Z_q^n)} \Big[ \ket{\phi_{\bm{s}, r}(\bm{x})} \bra{\phi_{\bm{s}, r}(\bm{x})} \Big]. \]
Therefore, the output of the distribution $\mu_{\bm{s}, r}$ is always described by the same state $\rho_{\bm{s}, r}$. In other words, a sample from the distribution $\mu_{\bm{s}, r}$ is a \textit{copy} of the state $\rho_{\bm{s}, r}$.

\begin{definition}[EDCP, Search]
    Let $n$, $q$ and $r$ be defined as above. The search-$\edcp_{n, q, r}$ is the problem of recovering $\bm{s}$ given samples from the distribution $\mu_{\bm{s}, r}$. A quantum algorithm $Q$ is said to solve $\edcp_{n, q, r}$ if it outputs $\bm{s}$ with probability at least $1 / \poly(n\log q)$ and has running time at most $\poly(n\log q)$.
\end{definition}

\begin{definition}[EDCP, Decision]
    \label{def:d-edcp}
    Let $n$, $q$ and $r$ be defined as above. Define the probability distribution $\gamma_r: \SX \rightarrow [0, 1]$ by choosing $(j, \bm{x}) \in \Z_r \times \Z_q^n$ uniformly at random and outputting the state $\ket{j}\ket{\bm{x}}$. The decision-$\edcp_{n, q, r}$ is the problem of distinguishing between the distributions $\mu_{\bm{s}, r}$ and $\gamma_r$. A quantum algorithm $Q$ is said to solve the decision-$\edcp_{n, q, r}$ if it succeeds with advantage at least $1 / \poly(n\log q)$ and has running time at most $\poly(n\log q)$.
\end{definition}

The density operator corresponding to the output of the distribution $\gamma_r$ in Definition \ref{def:d-edcp} is
\[ \rho = \frac{1}{rq^n} \sum_{j = 0}^{r - 1} \sum_{\bm{x} \in \Z_q^n}  \ket{j}\ket{\bm{x}} \bra{j}\bra{\bm{x}} = \E_{(j, \bm{x}) \in \U(\Z_r \times \Z_q^n)} \Big[ \ket{j}\ket{\bm{x}} \bra{j}\bra{\bm{x}} \Big] = \frac{1}{rq^n} \mathds{1}_{\X}, \]
which is the maximally mixed state over the space $\X$. Therefore, decision-$\edcp_{n, q, r}$ is the problem of distinguishing between the same number of copies of the states $\rho_{\bm{s}, r}$ and $\mathds{1}_{\X} / (rq^n)$. For an integer $m > 0$, we denote by $\edcp_{n, q, r}^m$ the $\edcp$ problem in which the number of samples from $\mu_{\bm{s}, r}$ is bounded by $m$. The following theorem establishes a polynomial-time equivalence between $\lwe$ and $\edcp$.

\begin{theorem}[\cite{brakerski2018learning}]
    \label{thm:lwe-edcp}
    Let $\chi$ be a discrete Gaussian distribution centered around zero with standard deviation $\alpha q$. There is a polynomial-time quantum reduction from $\lwe_{n, q, \chi}$ to $\edcp_{n, q, r}^m$ with $m = \poly(n\log q)$ and $r = \poly(n\log q) / \alpha$. Conversely, for the same parameter relationship up to $\poly(n\log q)$ factors, there is a polynomial-time quantum reduction from $\edcp$ to $\lwe$. 
\end{theorem}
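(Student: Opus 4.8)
\emph{Proof sketch (proposal).} Both reductions are quantum, and both rest on a single structural observation: up to the efficiently invertible quantum Fourier transform over $\Z_q^n$, a copy of $\rho_{\bm{s}, r}$ is the same thing as a superposition, over a uniformly random $\bm{a}\in\Z_q^n$ tagged by $\ket{\bm{a}}$, of the ``clock state'' $\frac{1}{\sqrt{r}}\sum_{j=0}^{r-1}\omega_q^{j\lrang{\bm{a},\bm{s}}}\ket{j}$ (times a phase depending on the coset that becomes a harmless global phase once $\bm{a}$ is measured). Reading $\lrang{\bm{a},\bm{s}}$ off such a clock, approximately, is the same as producing an $\lwe$ sample for $\bm{s}$; writing an $\lwe$ sample into such a clock is the same as producing an $\edcp$ state. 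I would establish the easier direction, $\edcp\to\lwe$, first.

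\emph{From $\edcp$ to $\lwe$.} Given a copy of $\rho_{\bm{s},r}$, namely $\frac{1}{\sqrt{r}}\sum_{j<r}\ket{j}\ket{\bm{x}+j\bm{s}}$ for an unknown uniform $\bm{x}$, apply the quantum Fourier transform over $\Z_q^n$ to the coset register and measure it. A one-line computation shows the outcome $\bm{a}$ is \emph{exactly} uniform over $\Z_q^n$ and leaves the index register in $\frac{1}{\sqrt{r}}\sum_{j<r}\omega_q^{j\lrang{\bm{a},\bm{s}}}\ket{j}$, the global phase $\omega_q^{\lrang{\bm{a},\bm{x}}}$ being irrelevant. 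Embed this register into a $q$-dimensional one (zero amplitude on $j\ge r$), apply $\qft_q$, and measure to get $k\in\Z_q$; the amplitude at $k$ is a Dirichlet kernel in $k+\lrang{\bm{a},\bm{s}}$, so $k=-\lrang{\bm{a},\bm{s}}+e'$ with $e'$ a small integer of typical size $O(q/r)$, i.e.\ $(\bm{a},-k)$ is an $\lwe$-type sample for $\bm{s}$ with error width $\tildO(q/r)$. The one wrinkle is that this error is (squared-)Dirichlet rather than discrete Gaussian; I would repair it by convolving with a fresh discrete Gaussian of width slightly above $q/r$ (a standard smoothing / error-switching step), at the cost of a $\poly(n\log q)$ factor, obtaining genuine $\lwe_{n,q,\alpha}$ samples with $\alpha=\tildO(1/r)$. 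Feeding $m=\poly(n\log q)$ of these (one per $\edcp$ copy) to the $\lwe$ solver recovers $\bm{s}$.

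\emph{From $\lwe$ to $\edcp$.} This direction is substantially harder and I do not expect an elementary argument to suffice. One cannot simply reverse the pipeline above: undoing the measurement of $\bm{a}$ would require coherent access to $\bm{a}\mapsto\lrang{\bm{a},\bm{s}}$ at exponentially many points, whereas $\lwe$ reveals this map at one random point per sample. Nor does ``solve the system in superposition'' work: from $n$ samples $(A,\bm{b}=A\bm{s}+\bm{e})$ one can coherently prepare $\frac{1}{\sqrt{r}}\sum_{j<r}\ket{j}\ket{\bm{x}+jA^{-1}\bm{b}}$, but $A^{-1}\bm{b}=\bm{s}+A^{-1}\bm{e}$ carries the useless shift $A^{-1}\bm{e}$, which is essentially uniform modulo $q$. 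Instead I would follow the quantum strategy of \cite{brakerski2018learning}: from $\lwe$ samples, use a Regev-type procedure to build discrete Gaussian superpositions tied to the secret $\bm{s}$, reshape them into copies of the \emph{Gaussian} variant of $\edcp$ with a small parameter (which is what the $\lwe$ error rate naturally supports), amplify that parameter up to $r=\poly(n\log q)/\alpha$, and finally invoke the reduction between the Gaussian and uniform variants of $\edcp$; running the $\edcp$ solver on $m=\poly(n\log q)$ of the resulting copies then returns an affine image of $\bm{s}$, from which $\bm{s}$ follows.

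\emph{Main obstacle.} The crux is the $\lwe\to\edcp$ direction: manufacturing \emph{high-fidelity}, \emph{large-$r$} $\edcp$ states out of \emph{noisy} $\lwe$ samples. The direct inversion of the first reduction is blocked both by the per-term phase $\omega_q^{je}$, whose argument reaches $\Theta(r\alpha)=\poly(n\log q)\gg 1$, and by the inability to rebuild the $\bm{a}$-superposition from polynomially many samples; this forces the detour through small-parameter Gaussian $\edcp$, a parameter-amplification procedure, and the Gaussian-to-uniform bridge, each carrying its own fidelity accounting and modular-arithmetic side conditions. The subordinate difficulty, on the $\edcp\to\lwe$ side, is merely that the extracted noise is Dirichlet rather than Gaussian; smoothing handles it, but only within the $\poly(n\log q)$ slack the statement already allows in the relation between $r$ and $\alpha$.
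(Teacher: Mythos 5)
Your $\edcp\to\lwe$ direction begins exactly as the paper's own argument (Section~\ref{sec:new-decsn}): Fourier-transform the coset register, measure to obtain a uniform $\bm{a}$, and read the surviving clock state with $\qft_q$. But the point where you diverge is precisely where the difficulty lives. Applying $\qft_q$ to the \emph{flat} superposition $\frac{1}{\sqrt{r}}\sum_{j<r}\omega_q^{j\lrang{\bm{a},\bm{s}}}\ket{j}$ yields an outcome $k$ with $k+\lrang{\bm{a},\bm{s}}=e'$ where $e'$ follows a Fej\'er (squared-Dirichlet) profile whose tails decay only quadratically, so $\Pr[\abs{e'}>T]=\Omega(q/(rT))$. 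Consequently a $1/\poly(n\log q)$ fraction of samples carry an error exceeding any polynomially related bound, and convolving with a fresh discrete Gaussian of width $\poly\cdot q/r$ cannot wash this out: the convolved error remains at statistical distance $1/\poly(n\log q)$ per sample from $\mathcal{D}_{\Z,\alpha q}$, whereas the $\lwe$ oracle is only guaranteed to work on genuine $A_{\bm{s},\chi}$ samples. The fix used in the paper (following Brakerski \textit{et al.}) is to reshape the amplitude profile \emph{before} the Fourier transform: quantum rejection sampling (Lemma~\ref{lem:qrs}) imposes a Gaussian envelope $g_\lambda(j)$ on the index register, succeeding with probability $\Omega(\lambda/r)=\Omega(1/\sqrt{n})$ for $r=\lfloor\lambda\sqrt{n}\rfloor$, after which Poisson summation (Theorem~\ref{thm:poisson-sum}) and the tail bound (Corollary~\ref{cor:gaus-apprx}) show the post-$\qft_q$ error is exponentially close to $\mathcal{D}_{\Z,q/\lambda}$. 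That reshaping step is the missing idea in your sketch; the post-hoc smoothing substitute does not repair heavy tails.

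For $\lwe\to\edcp$ you correctly diagnose why naive inversion fails, but you then defer to \cite{brakerski2018learning} with an outline (small-parameter Gaussian $\edcp$, parameter amplification, a Gaussian-to-uniform bridge) that names intermediate objects without supplying the mechanism, and that is not the route the paper sketches. Section~\ref{sec:hardness-poly} uses Regev's ball-intersection technique: from samples $(\bm{A},\bm{b}_0=\bm{A}\bm{s}_0+\bm{e}_0)$ prepare $\sum_{\bm{s}}\sum_{j<r}\ket{j}\ket{\bm{s}+j\bm{s}_0}\ket{\bm{A}\bm{s}-j\bm{e}_0}$, tensor with a superposition over a discretized ball $\tilde{\mathrm{B}}_m(0,R)$ added to the third register, and measure that register; for a suitable radius the balls around $\bm{A}\bm{s}-j\bm{e}_0$ overlap heavily across $j$ for fixed $\bm{s}$ but are disjoint across distinct $\bm{s}$, so the measurement collapses the state onto a single $\bm{s}$ while retaining all $r$ values of $j$, and the third register is then uncomputed using $(\bm{A},\bm{b}_0)$. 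This collapse argument is the substantive content of the direction: it is the only step that entangles all $r$ indices with a common coset using polynomially many noisy samples, and your proposal contains no equivalent of it.
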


It is important to note that the equivalence in Theorem \ref{thm:lwe-edcp} holds only when the number of $\edcp$ samples is polynomially bounded, i.e., $m = \poly(n\log q)$. Giving such an equivalence for arbitrary $m$ is an open problem, see the discussion in Section \ref{sec:hardness-poly}.


\section{A Search to Decision Reduction}
\label{sec:old-search-dec}

In this section, we give a search-to-decision reduction for $\edcp$. The reduction works for a large class of moduli $q$. The technique we use is inspired by the one in \cite{micciancio2012trapdoors} for a search-to-decision reduction for $\lwe$. We need the following lemma, which shows the self-reducibility of $\edcp$.

\begin{lemma}
    \label{lem:self-rd}
    For any $r' \le r$, given access to the distribution $\mu_{\bm{s}, r}$, we can efficiently sample from the distribution $\mu_{\bm{s}, r'}$. In particular, there is an efficient reduction from $\edcp_{n, q, r}$ to $\edcp_{n, q, r'}$.
\end{lemma}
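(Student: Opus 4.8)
The plan is to show that from a single copy of the state $\ket{\phi_{\bm{s},r}(\bm{x})} = \frac{1}{\sqrt r}\sum_{j=0}^{r-1}\ket{j}\ket{\bm x + j\bm s}$ we can, by a local measurement on the first register, produce a copy of $\ket{\phi_{\bm{s},r'}(\bm x')}$ for a fresh uniformly random $\bm x' \in \Z_q^n$. The first step is to write $r = r' \ell + t$ with $0 \le t < r'$; the cleanest argument, however, is to observe that the index set $\{0,1,\dots,r-1\}$ contains the contiguous block $\{0,1,\dots,r'-1\}$, and more generally every shifted block $\{k, k+1, \dots, k+r'-1\}$ for $0 \le k \le r-r'$. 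So the idea is: apply to the first register a measurement whose outcomes correspond to ``which length-$r'$ window the index falls in,'' or more simply, measure a coarse function of $j$ that isolates a window of $r'$ consecutive values.

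Concretely, I would proceed as follows. Append an ancilla, compute $k = \lfloor j / r' \rfloor$ into it (a reversible classical computation on the label register), and measure the ancilla. Conditioned on outcome $k$, the post-measurement state on the surviving indices $j \in \{k r', k r'+1, \dots, k r' + r' - 1\}$ (intersected with $\{0,\dots,r-1\}$) is proportional to $\sum_{j} \ket{j}\ket{\bm x + j\bm s}$ over that window. There is a subtlety: the last window may be truncated when $r' \nmid r$, so its outcome yields fewer than $r'$ terms. To handle this cleanly I would instead pick the window to start at a uniformly random offset: sample $k$ uniformly from $\{0,1,\dots,r-r'\}$ on the side (classically), and use a measurement/projection onto the subspace spanned by $\{\ket{k}, \ket{k+1}, \dots, \ket{k+r'-1}\}$ in the label register. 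Since $k + r' - 1 \le r-1$, this window always has exactly $r'$ elements and always has nonzero overlap with the uniform superposition, so the projection succeeds with probability exactly $r'/r$; on failure, just discard and draw a fresh sample from $\mu_{\bm s, r}$ (each attempt succeeds with constant probability $r'/r$ only if $r'/r$ is bounded below, so to be safe one may also allow rejection sampling and note the expected number of attempts is $r/r'$, which is at most $q$ — or simply fix $k=0$ and accept the truncation issue below).

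After a successful projection with offset $k$, the state is $\frac{1}{\sqrt{r'}}\sum_{i=0}^{r'-1}\ket{k+i}\ket{\bm x + (k+i)\bm s}$. Now I relabel: subtract $k$ from the first register (a reversible map $\ket{k+i}\mapsto\ket{i}$) and, simultaneously, note that the second register reads $\ket{(\bm x + k\bm s) + i\bm s}$. Setting $\bm x' := \bm x + k \bm s$, the state is exactly $\ket{\phi_{\bm s, r'}(\bm x')} = \frac{1}{\sqrt{r'}}\sum_{i=0}^{r'-1}\ket{i}\ket{\bm x' + i\bm s}$. The final step is the distributional check: since $\bm x$ is uniform on $\Z_q^n$ and $k$ is chosen independently of $\bm x$, the shift $\bm x' = \bm x + k\bm s$ is again uniform on $\Z_q^n$ (translation is a bijection of $\Z_q^n$), so the output is distributed exactly as $\mu_{\bm s, r'}$. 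This gives the sampling claim, and the ``in particular'' reduction statement follows immediately: any solver for $\edcp_{n,q,r'}$ can be fed these simulated samples to solve $\edcp_{n,q,r}$, with only a polynomial blow-up (a constant-factor-in-expectation, or at worst $\poly(\log q)$-repetition, overhead for the rejection step, plus $\tildO(n)$ qubit operations per sample for the arithmetic).

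The main obstacle is essentially bookkeeping rather than conceptual: making the window-projection genuinely succeed with non-negligible probability and verifying the conditional post-measurement state is exactly the uniform superposition over the window (no stray amplitudes from the terms outside $\{0,\dots,r-1\}$). Choosing the random offset $k \in \{0,\dots,r-r'\}$ sidesteps both issues at once, since every such window lies entirely inside the support $\{0,\dots,r-1\}$ and has the full $r'$ terms with equal amplitude $1/\sqrt r$ before renormalization. One should also double-check the edge case $r' = r$, where the reduction is the identity, and confirm that nothing requires $\bm s$ to be known — only scalar multiplication by the publicly-available register contents and the known offset $k$ is used.
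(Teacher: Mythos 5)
Your first instinct --- compute the block index $\lfloor j/r'\rfloor$ into an ancilla and measure it --- is exactly the paper's construction, and the ``truncation subtlety'' you then set out to fix is not actually a problem: with that many-outcome measurement, \emph{every} outcome except the single truncated last block yields a full window of $r'$ consecutive indices, so the per-sample success probability is $\lfloor r/r'\rfloor\, r'/r \ge 1/2$ for all $r'\le r$ (check the two ranges $r'>r/2$ and $r'\le r/2$ separately, as the paper does), and on the bad outcome you simply discard and take a fresh sample, for an expected $2$ samples in total. The step where your argument genuinely breaks is the replacement of this measurement by a projection onto a \emph{single} randomly chosen window $\{k,\dots,k+r'-1\}$: that projection succeeds with probability exactly $r'/r$, which is not bounded below, and your own accounting (``expected number of attempts is $r/r'$, which is at most $q$'') concedes an overhead that can be exponential in the input size $n\log q$ --- the required running time is $\poly(n\log q)$, and $q$ may be superpolynomial in $n$ (e.g.\ $q=p^{s}$ with large $s$, as in the paper's cryptosystem). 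The fallback ``fix $k=0$'' does not repair this, since it is the same single-window projection with the same acceptance probability $r'/r$. The sample overhead also matters beyond running time: the paper explicitly needs $\edcp_{n,q,r}^m$ to reduce to $\edcp_{n,q,r'}^{\Theta(m)}$ \emph{regardless of the ratio between $r$ and $r'$}, which the window-projection version violates.

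Everything else in your write-up is sound: the post-measurement state on a full window is the uniform superposition over $r'$ consecutive indices with no stray amplitudes, the relabeling $\ket{k+i}\mapsto\ket{i}$ together with $\bm{x}'=\bm{x}+k\bm{s}$ produces exactly $\ket{\phi_{\bm{s},r'}(\bm{x}')}$, and $\bm{x}'$ is uniform because translation by $k\bm{s}$ is a bijection of $\Z_q^n$. So the repair is simply to revert to the block measurement you started with, declaring the truncated last block a failure; this restores both polynomial running time and the constant-factor sample overhead.
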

\begin{proof}
    To keep the reduction efficient, we treat the two cases $r' > r / 2$ and $r' \le r / 2$ separately. If $r' > r / 2$ then a simple indicator function can be used to to produce samples from $\mu_{\bm{s}, r'}$. More precisely, define the function $f: [0, r) \rightarrow \{ 0, 1 \}$ by
    \[ f(x) = 
    \begin{cases}
        1 & \text{if } x < r', \\
        0 & \text{otherwise}.
    \end{cases} \]
    Then applying the transform $\ket{j}\ket{\bm{y}}\ket{0} \mapsto \ket{j}\ket{\bm{y}}\ket{f(j)}$, where $\bm{y} \in \Z_q^n$, to $\rho_{\bm{s}, r} \in \mu_{\bm{s}, r}$ and measuring the last register results in the state $\rho_{\bm{s}, r'}$ with probability at least $1 / 2$. If the measurement outcome is not $1$ then we repeat the above process.

    If $r' \le r / 2$ then we proceed as follows. Let $\X = \C^{\Z_r \times \Z_q^n}$ and define the measurement $\mu: [0, \lfloor r / r'  \rfloor] \rightarrow \PosSemi(\X)$ using the operators
    \[ \mu_a = \sum_{b = 0}^{\ell - 1} \ket{ar' + b}\bra{ar' + b} \otimes \mathds{1}, \]
    where $\ell = r'$ for $0 \le a < \lfloor r / r'  \rfloor$ and $\ell = r - r'$ for $a = \lfloor r / r'  \rfloor$. This measurement can be implemented efficiently \cite[A.8]{kaye2007introduction}. If we perform $\mu$ on a sample $\rho_{\bm{s}, r}$ from $\mu_{\bm{s}, r}$ the probability of observing the outcome $a$ is
    \begin{align*}
        \tr(\mu_a^*\mu_a \rho_{\bm{s}, r})
        & = \E_{\bm{x} \in \U(\Z_q^n)} \Big[ \tr(\mu_a \ket{\phi_{\bm{s}, r}(\bm{x})} \bra{\phi_{\bm{s}, r}(\bm{x})} \mu_a^*) \Big] \\
        & = \frac{1}{r} \sum_{b, c = 0}^{\ell - 1} \E_{\bm{x} \in \U(\Z_q^n)}\Big[ \tr(\ket{ar' + b}\ket{\bm{x} + (ar' + b)\bm{s}} \bra{ar' + c}\bra{\bm{x} + (ar' + c)\bm{s}}) \Big] \\
        & = \frac{1}{r} \sum_{b, c = 0}^{\ell - 1} \E_{\bm{x} \in \U(\Z_q^n)}\Big[ \tr(\ket{ar' + b}\bra{ar' + c} \otimes \ket{\bm{x} + (ar' + b)\bm{s}} \bra{\bm{x} + (ar' + c)\bm{s}}) \Big] \\
        & = \frac{\ell}{r},
    \end{align*}
    and the post-measurement state corresponding to this outcome is
    \begin{align*}
        \frac{\mu_a \rho_{\bm{s}, r} \mu_a^*}{(\ell / r)}
        & = \frac{1}{r} \sum_{b, c = 0}^{\ell - 1} \E_{\bm{x} \in \U(\Z_q^n)}\Big[ \ket{ar' + b}\ket{\bm{x} + (ar' + b)\bm{s}} \bra{ar' + c}\bra{\bm{x} + (ar' + c)\bm{s}} \Big] \\
        & = \frac{1}{r} \sum_{b, c = 0}^{\ell - 1} \E_{\bm{x} \in \U(\Z_q^n)}\Big[ \ket{ar' + b}\ket{\bm{x} + b\bm{s}} \bra{ar' + c}\bra{\bm{x} + c\bm{s}} \Big].
    \end{align*}
    Subtracting $ar'$ from the first register, we obtain the state $\rho_{\bm{s}, \ell}$. So if the outcome is $a \in [0, \lfloor r / r'  \rfloor)$ we obtain the state $\rho_{\bm{s}, r'}$, which is what we are looking for. Therefore, the probability of obtaining the desired state after one measurement is
    \[ \lfloor r / r' \rfloor \frac{\ell}{r} = \lfloor r / r' \rfloor \frac{r'}{r} \ge \left( \frac{r}{r'} - 1 \right)\frac{r}{r'} = 1 - \frac{r'}{r} \ge \frac{1}{2}. \]
    If the measurement outcome is $a = \lfloor r / r' \rfloor$ then we repeat the above process.
\end{proof}

It follows from the proof of Lemma \ref{lem:self-rd} that obtaining a sample from $\mu_{\bm{s}, r'}$ requires (an expected) $2$ samples from $\mu_{\bm{s}, r}$. This means $\edcp_{n, q, r}^m$ is reduced to $\edcp_{n, q, r'}^{\Theta(m)}$ regardless of the ratio between $r$ and $r'$.

\begin{theorem}
    \label{thm:old-search-decision}
    Let $q = p_1^{e_1} \cdots p_\ell^{e_\ell}$ be the prime factorization of $q$ and assume that the primes $p_i$ are of size $\poly(n)$. Let $0 < r < q$ and let $k$ be the number of primes $p_i < r$. Then there is a polynomial-time quantum reduction from solving worst-case search-$\edcp_{n, q, r}$, with overwhelming probability, to solving average-case decision-$\edcp_{n, q, r'}$, with non-negligible probability, for any $r' \le r$ such that $(r')^k \le r$, and $r' \le p_i^{e_i}$ for all $i$. 
\end{theorem}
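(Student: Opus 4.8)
The plan is to reduce the worst‑case search problem to a sequence of ``is this digit of a coordinate of $\bm{s}$ correct?'' decisions, each answered by one amplified call to the average‑case decision oracle.

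\textbf{Step 0 (worst‑case to average‑case, and CRT).} First I would use the self‑reducibility already present in $\edcp$: applying the controlled shift $\ket{j}\ket{\bm{z}}\mapsto\ket{j}\ket{\bm{z}+j\bm{t}}$ to $\mu_{\bm{s},r}$ for a uniform $\bm{t}\in\Z_q^n$ turns it into $\mu_{\bm{s}+\bm{t},r}$, so worst‑case search‑$\edcp_{n,q,r}$ reduces to average‑case search‑$\edcp_{n,q,r}$; hence I may assume $\bm{s}$ is uniform, and it suffices to hand the decision oracle average‑case instances. By the Chinese Remainder Theorem it then suffices to recover $\bm{s}\bmod p_i^{e_i}$ for every $i$ and combine; I would recover each $\bm{s}\bmod p_i^{e_i}$ coordinate by coordinate and, within a coordinate, $p_i$‑adic digit by digit. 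There are only $n\sum_i e_i=\poly(n\log q)$ digits in total, each lying in $[0,p_i)$ with $p_i=\poly(n)$.

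\textbf{Step 1 (the digit test).} Fix a prime $p=p_i$ with exponent $e$, suppose $s_1\bmod p^t$ is already known, and let $c\in[0,p^{t+1})$ encode it together with a guessed next digit. Given an $\edcp_{n,q,r'}$ sample — produced from $\mu_{\bm{s},r}$ by Lemma~\ref{lem:self-rd} and re‑randomised — I apply $\ket{j}\ket{\bm{z}}\mapsto\ket{j}\ket{\bm{z}-jc\bm{e}_1}$, replacing the secret $\bm{s}$ by $\bm{s}'=\bm{s}-c\bm{e}_1$, so $v_p(s_1')\ge t+1$ if the guess is right and $v_p(s_1')=t$ if it is wrong. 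I then compute the first coordinate of the second register modulo $p^{t+1}$ into an ancilla and discard it. A density‑operator computation of the kind in the proof of Lemma~\ref{lem:self-rd} shows that in the correct case this is the identity and the state is exactly $\rho_{\bm{s}',r'}$, while in the wrong case it decoheres the $j$‑register into arithmetic progressions of common difference $p^{\,t+1-v_p(s_1')}=p$; when $p\ge r'$ these are singletons and the state becomes the maximally mixed $\mathds{1}_{\X}/(r'q^n)$. Feeding the result to the decision oracle — amplified to error $e^{-\Omega(n)}$ by the majority‑of‑samples argument of Section~\ref{sec:err-red} — distinguishes the right digit from the wrong ones. Every prime $p_i\ge r'$ (in particular all $p_i\ge r$, the ones \emph{not} counted by $k$) has all its digits recovered this way.

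\textbf{Step 2 (small primes and the budget constraint).} The obstacle is a prime $p_i<r'$, which is necessarily one of the $k$ primes below $r$: a ``barely wrong'' guess then leaves period exactly $p_i<r'$, and the leftover state $\rho_{p_i\bm{s}',\,r'/p_i}$ is neither an $r'$‑$\edcp$ state nor maximally mixed, so the oracle of Step~1 cannot reject it. Here I would recover $\bm{s}\bmod p_i^{e_i}$ in stages rather than digit by digit, invoking the hypothesis $r'\le p_i^{e_i}$: scaling the second register divides out already‑known low powers of $p_i$ and effectively shrinks the modulus, so that each stage is again a clean ``modulus a prime power $\ge r'$'' instance handled by Step~1; and one such stage costs a factor $r'$ in the superposition length of $\mu_{\bm{s},\cdot}$, so carrying it out for all $k$ small primes needs $r\ge (r')^k$ — exactly the hypothesis $(r')^k\le r$, with the required samples supplied by Lemma~\ref{lem:self-rd}. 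Making this staged argument precise — tracking how the superposition length shrinks and checking that the instances handed to the oracle remain average‑case throughout — is where I expect the bulk of the work, and it is the part most likely to need a different idea than the ones above.

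\textbf{Step 3 (wrap‑up).} Once every $p_i$‑adic digit of every coordinate is known, CRT reconstructs $\bm{s}$ modulo $q$, and subtracting the shift $\bm{t}$ from Step~0 returns the original secret. A union bound over the $\poly(n\log q)$ oracle calls, each failing with probability $e^{-\Omega(n)}$, gives overwhelming success, and the whole reduction runs in time $\poly(n\log q)$.
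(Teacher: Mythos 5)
Your Steps 0, 1 and 3 match the paper's skeleton (reduce to $\mu_{\bm{s},r'}$ via Lemma~\ref{lem:self-rd}, recover $\bm{s}$ prime-by-prime and digit-by-digit with a ``correct guess vs.\ wrong guess'' test fed to the amplified decision oracle, then CRT), and your analysis of the digit test is correct precisely when $p_i \ge r'$: there a wrong guess pins $j$ modulo $p_i$, which is a singleton in $[0,r')$, so the state collapses to the maximally mixed state that the oracle is guaranteed to reject. The gap is exactly where you flag it, in Step~2, and the mechanism you guess for how the hypotheses $(r')^k \le r$ and $r' \le p_i^{e_i}$ enter is not the one that works. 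For a prime $p_i < r'$ a wrong guess leaves a coset state supported on an arithmetic progression of stride $p_i$ inside $[0,r')$, which is neither $\mu_{\bm{s},r'}$ nor maximally mixed, and ``dividing out known low powers of $p_i$'' does not change this: the oracle has no guaranteed behaviour on such states, so no amount of staging of the form you describe lets you certify all $e_i$ digits.

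What the paper does instead is twofold. First, it builds the full chain of hybrid distributions $\mu_{\bm{s},r'}^0,\dots,\mu_{\bm{s},r'}^h$ (obtained by measuring $j \bmod p^k$), where $h$ is minimal with $p^h \ge r'$; the endpoints are $\mu_{\bm{s},r'}$ and the maximally mixed state, so by Lemma~\ref{lem:hybrid} the oracle distinguishes some consecutive pair $\mu^{t-1},\mu^t$, and the digit test is run \emph{relative to that level} $t$. The hypothesis $r' \le p_i^{e_i}$ is only used to guarantee $h \le e_i$ so that this chain exists. Second, the paper does \emph{not} recover all digits for small primes: it stops at $\bm{s} \bmod p_i^{e_i - t_i + 1}$, accepting an unrecovered factor $p_i^{t_i-1} < r'$ per small prime. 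After CRT the unknown part of $\bm{s}$ lives modulo $q' = q/v$ with $q' \le (r')^k \le r$, and this residue is extracted \emph{without the oracle}: one subtracts the known part, projects the superposition length down to $q'$ (again Lemma~\ref{lem:self-rd}, which needs $q' \le r$ --- this is where $(r')^k \le r$ is actually used), and applies $\qft_{q^n}$ to the second register to Fourier-sample linear equations $\lrang{\bm{u},\bm{s}/v} \bmod q'$ as in Remark~\ref{rmk:edcp-extreme}. So the missing idea in your proposal is the combination of the hybrid-chain localization of the oracle's advantage with a final oracle-free Fourier-sampling step that mops up the digits the oracle cannot certify.
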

\begin{proof}
    Let $D$ be an oracle for solving decision-$\edcp_{n, q, r'}$. The idea of the proof is to use $D$ and samples from the distribution $\mu_{\bm{s}, r}$ to recover $\bm{s} \bmod p_i^{h_i}$, with large-enough $h_i$, for each $i$, and then assemble the results using the Chinese remainder theorem to recover $\bm{s} \bmod \prod_i p_i^{h_i}$. From there, since $q / \prod_i p_i^{h_i}$ is small-enough, we can use quantum Fourier transform to recover $\bm{s} \bmod q$. We shall compute $\bm{s} \bmod p_1^{e_1}$, the algorithm is the same for the other $p_i$. Let $p = p_1$ and $e = e_1$. The proof proceeds in several steps.

    \begin{enumerate}[leftmargin = *, font = \bfseries]
    \item Sampling from $\mu_{\bm{s}, r'}$: given samples from $\mu_{\bm{s}, r}$, according to Lemma \ref{lem:self-rd}, we can efficiently sample from $\mu_{\bm{s}, r'}$. So, from now on we assume that we have access to samples from $\mu_{\bm{s}, r'}$.

    \item Building hybrid distributions: from the distribution $\mu_{\bm{s}, r'}$ we construct the distribution $\mu_{\bm{s}, r'}^k$ for all $k = 0, \dots, e$. Given a sample $\rho_{\bm{s}, r'} \in \mu_{\bm{s}, r'}$, a sample from $\mu_{\bm{s}, r'}^k$ is obtained by computing $j \bmod p^k$ into an auxiliary register and then measuring the register. More precisely, denote by $\ket{\phi_{\bm{s}, r'}^k(\bm{x})}$ the result of 
    \begin{align}
        \ket{\phi_{\bm{s}, r'}(\bm{x})}\ket{0}
        & \mapsto \frac{1}{\sqrt{r'}} \sum_{j = 0}^{r' - 1} \ket{j}\ket{\bm{x} + j\bm{s}}\ket{j \bmod p^k} \label{equ:r-1}  \\
        & \mapsto \frac{1}{\sqrt{r_k}} \sum_{j = 0}^{r_k - 1} \ket{jp^k + c}\ket{\bm{x} + (jp^k + c)\bm{s}}, \nonumber \tag{measure the last register}
    \end{align}
    where $0 < r_k \le \lfloor r' / p^k \rfloor$ and the random constant $0 \le c \le p^k - 1$ depend on the outcome of measuring the last register. Then
    \[ \rho_{\bm{s}, r'}^k = \E_{\bm{x} \in \U(\Z_q^n)} \Big[ \ket{\phi_{\bm{s}, r'}^k(\bm{x})} \bra{\phi_{\bm{s}, r'}^k(\bm{x})} \Big] \]
    is a sample from $\mu_{\bm{s}, r'}^k$.
    
    \item\label{step:s-mod-p} Computing $\bm{s} \bmod p$: for $k = 0$ we have $j = 0 \bmod p^0$ for all $j$, so $\rho_{\bm{s}, r'}^0 = \rho_{\bm{s}, r'}$ hence $\mu_{\bm{s}, r'}^0 = \mu_{\bm{s}, r'}$. Let $h$ be the smallest integer such that $r' \le p^h$, such an $h$ exists since $r' \le p^e$ by assumption. Then for $k = h$, measuring the last register in \eqref{equ:r-1} collapses the state $\rho_{\bm{s}, r'}$ to $\mathds{1}_{\X'} / (r'q^n)$ where $\X' = \C^{\Z_{r'} \times \Z_q^n}$. Therefore, by a hybrid argument (Lemma \ref{lem:hybrid}) there is a minimal $0 < t \le h$ such that $D$ can distinguish between $\mu_{\bm{s}, r'}^{t - 1}$ and $\mu_{\bm{s}, r'}^t$ with non-negligible advantage. Using the amplification technique of Section \ref{sec:err-red} we can assume that the distinguishing advantage of $D$ is exponentially close to $1$. Note that $t$ can be efficiently computed by analyzing the output of $D$. Let $\bm{s} = (s_1, \dots, s_n)$. We recover $s_1 \bmod p$, the other $s_i \bmod p$ can be recovered similarly. Consider the state $\ket{\phi_{\bm{s}, r'}^{t - 1}(\bm{x})}$ where $\bm{x} = (x_1, \dots, x_n)$, and let $a \in \Z_p$ and $\bm{y} = (y_1, \dots, y_n) \in \Z_q^n$ be arbitrary. If we perform the transform
    \begin{equation}
        \label{equ:s1-trans}
        U_1: \ket{j}\ket{\bm{y}}\ket{0} \mapsto \ket{j}\ket{\bm{y}}\ket{y_1 - ja \bmod p^t}
    \end{equation}
    on $\ket{\phi_{\bm{s}, r'}^{t - 1}(\bm{x})}\ket{0}$ we obtain the state
    \[ \frac{1}{\sqrt{r_{t - 1}}} \sum_{j = 0}^{r_{t - 1} - 1} \ket{jp^{t - 1} + c}\ket{\bm{x} + (jp^{t - 1} + c)\bm{s}}\ket{x_1 + (jp^{t - 1} + c)(s_1 - a) \bmod p^t}. \]
    Measuring the last register results in a state that will be a sample from $\mu_{\bm{s}, r'}^{t - 1}$ or $\mu_{\bm{s}, r'}^t$ depending on whether $s_1 = a$ or $s_1 \ne a \bmod p$, respectively:
    \begin{itemize}
    \item $s_1 = a \bmod p$. In this case, the value of the last register is $x_1 + (s_1 - a)c \bmod p^t$. So the last register is not entangled with the first two registers, and we obtain the original sample from $\mu_{\bm{s}, r'}^{t - 1}$.
        
    \item $s_1 \ne a \bmod p$. Let $0 \le c_1 \le p^k - 1$ be the outcome of the measurement. Then the post-measurement state contains the terms with $j$ satisfying $jp^{t -1} = (c_1 - x_1) / (s_1 - a) - c \bmod p^t$. If we write the right hand side as $c_2p^{t - 1}$ for some constant $0 \le c_2 \le p - 1$ then $j = c_2 \bmod p$ and the post-measurement state is 
    \begin{align*}
        \ket{\psi_{\bm{x}}}
        & = \frac{1}{\sqrt{r_t}} \sum_{j = 0}^{r_t - 1} \ket{(jp + c_2)p^{t - 1} + c}\ket{\bm{x} + ((jp + c_2)p^{t - 1} + c)\bm{s}} \\
        & = \frac{1}{\sqrt{r_t}} \sum_{j = 0}^{r_t - 1} \ket{jp^t + c_2p^{t - 1} + c}\ket{\bm{x} + (jp^t + c_2p^{t - 1} + c)\bm{s}}
    \end{align*}
    where $0 < r_t \le \lfloor r_{t - 1} / p \rfloor$. We clearly have
    \[ \E_{\bm{x} \in \U(\Z_q^n)} \Big[ \ket{\psi_{\bm{x}}}\bra{\psi_{\bm{x}}}  \Big] \in \mu_{\bm{s}, r'}^t. \]
    \end{itemize}
    Therefore, using $D$ we can find out whether $s_1 = a \bmod p$. Since $p \le \poly(n)$, we can recover $s_1 \bmod p$ by trying every $a \in \Z_p$.
    \item\label{step:s-mod-pk} Computing $\bm{s} \bmod p^k$: assume we have recovered, for some $k > 1$, the first $k - 1$ digits of $s_1$ in base $p$, that is we have computed $0 \le \tilde{s}_1 < p^{k - 1}$ such that $\tilde{s}_1 = s_1 \bmod p^{k - 1}$. Let $s_{1, k - 1}$ be the $k$-th digit of $s_1$ in base $p$. To compute $s_{1, k - 1}$, we can modify the transform in Step \ref{step:s-mod-p} as
    \begin{equation*}
        U_k: \ket{j}\ket{\bm{y}}\ket{0} \mapsto \ket{j}\ket{\bm{y}}\ket{y_1 - j\tilde{s}_1 - jp^{k - 1}a \bmod p^{t + k - 1}}.
    \end{equation*}
    Applying $U_k$ to a sample from $\mu_{\bm{s}, r'}^{t - 1}$ and measuring the last register produces a sample from $\mu_{\bm{s}, r'}^{t - 1}$ or $\mu_{\bm{s}, r'}^t$ depending on whether $s_{1, k - 1} = a$ or $s_{1, k - 1} \ne a \bmod p$, respectively. This method works as long as $t + k - 1 \le e$.
    
    \item Quantum Fourier transform: using the procedure in Step \ref{step:s-mod-pk}, we can compute $\bm{s} \bmod p^{e - t + 1}$ where $t$ is  the minimal integer determined in Step \ref{step:s-mod-p}. Similarly, we recover $\bm{s} \bmod p^{e_i - t_i + 1}$ with $t_i$ the corresponding minimal integer for $p_i$, for all $i$. Note that if $r' \le p_i$ then $t_i = 1$. Also, we always have $r' \ge p_i^{t_i - 1}$. Using the Chinese remainder theorem we can compute $\tilde{\bm{s}} = \bm{s} \bmod v$ where
    \[ v = \prod_{i = 1}^\ell p_i^{e_i - t_i + 1} = q / \prod_{i = 1}^\ell p_i^{t_i - 1} \ge \frac{q}{(r')^k} \ge \frac{q}{r}. \]
    Now, by applying the transform $\ket{j}\ket{\bm{y}} \mapsto \ket{j}\ket{\bm{y} - j\tilde{\bm{s}}}$ to the states $\ket{\phi_{\bm{s}, r}(\bm{x})}$, we assume all the coordinates of $\bm{s}$ are multiples of $v$. Next, using Lemma \ref{lem:self-rd}, we project $\ket{\phi_{\bm{s}, r}(\bm{x})}$ onto $\ket{\phi_{\bm{s}, q'}(\bm{x})}$, where $q' = q / v \le r$. Finally, we apply $\qft_{q^n}$ to the second register of $\ket{\phi_{\bm{s}, q'}(\bm{x})}$ and measure to obtain the state
    \[ \frac{1}{\sqrt{q'}} \sum_{j = 0}^{q' - 1} \omega_q^{j\lrang{\bm{u}, \bm{s}}} \ket{j} = \frac{1}{\sqrt{q'}} \sum_{j = 0}^{q' - 1} \omega_{q'}^{j\lrang{\bm{u}, \bm{s} / v}} \ket{j} = \qft_{q'}\ket{\lrang{\bm{u}, \bm{s} / v} \bmod q'}, \]
    where $\bm{u} \in \Z_q^n$ is uniformly random. Applying $\qft_{q'}^*$ to the above state, we obtain $\lrang{\bm{u}, \bm{s} / v} \bmod q'$. The value $\bm{s} / v$ can be computed, with high probability, by gathering $O(n)$ of these linear equations. \qedhere
    \end{enumerate}
\end{proof}

\begin{corollary}
    Let $q = p_1^{e_1} \cdots p_\ell^{e_\ell}$ be the prime factorization of $q$ and assume that the primes $p_i$ are of size $\poly(n)$. If $r \le p_i$ for all $1 \le i \le \ell$ then there is a polynomial-time quantum reduction from solving search-$\edcp_{n, q, r}$ to solving decision-$\edcp_{n, q, r}$. 
\end{corollary}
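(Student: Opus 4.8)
The plan is to obtain the corollary as the degenerate $k = 0$ case of Theorem~\ref{thm:old-search-decision}. First I would unwind the definition of the parameter $k$ appearing in that theorem: $k$ is the number of primes $p_i$ with $p_i < r$. Under the hypothesis $r \le p_i$ for every $i$, no such prime exists, so $k = 0$. Consequently the first constraint on $r'$ in the theorem, namely $(r')^k \le r$, becomes $(r')^0 = 1 \le r$, which is automatic.

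Next I would verify that the choice $r' = r$ is admissible for the remaining constraint $r' \le p_i^{e_i}$ for all $i$: this is immediate since $r \le p_i \le p_i^{e_i}$. Feeding $r' = r$ into Theorem~\ref{thm:old-search-decision} then yields a polynomial-time quantum reduction from worst-case search-$\edcp_{n, q, r}$ (solvable with overwhelming probability) to average-case decision-$\edcp_{n, q, r}$ (solvable with non-negligible probability), which is precisely the assertion of the corollary. The success-probability and worst-case/average-case qualifiers are inherited verbatim from the theorem and need no extra argument.

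There is essentially no obstacle here; the entire content is checking that the two numerical side conditions of Theorem~\ref{thm:old-search-decision} collapse when $r$ is at most every prime divisor of $q$. It is worth remarking (though not needed for the proof) why the reduction simplifies in this regime: when $r' = r \le p_i$ one has $t_i = 1$ for every $i$, so the integer $v = \prod_i p_i^{e_i - t_i + 1}$ equals $q$, the digit-by-digit recovery of Step~4 already produces $\bm{s} \bmod q = \bm{s}$, and the concluding quantum Fourier transform step of the theorem is vacuous since $q' = q/v = 1$.
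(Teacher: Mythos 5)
Your proposal is correct and is exactly the intended derivation: the corollary is the specialization of Theorem~\ref{thm:old-search-decision} to $r' = r$, where the hypothesis $r \le p_i$ for all $i$ forces $k = 0$ so that $(r')^k = 1 \le r$ is vacuous and $r' = r \le p_i \le p_i^{e_i}$ holds. Your closing remark about $t_i = 1$, $v = q$, and the final Fourier-transform step being vacuous is also consistent with the theorem's proof.
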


It follows from the above corollary that for a prime power $q = p^e$, where $p < \poly(n)$ and $r \le p$, search-$\edcp_{n, q, r}$ and decision-$\edcp_{n, q, r}$ are quantum polynomial-time equivalent. Two interesting special cases of this equivalence are $\{q = 2^e, r = 2\}$ and $\{q = p, r < p\}$.


\section{A New Decision Problem}
\label{sec:new-decsn}

In this section, we propose an $\edcp$ decision problem that we believe is more suitable for applications than the decision problem defined in Section \ref{sec:preli}. In particular, our public-key cryptosystem (Section \ref{sec:public-key-enc}) is based on the new decision problem. We show that the new decision problem is quantum polynomial-time equivalent to search-$\edcp$. This establishes the fact that the new decision problem is at least is as hard as the old one.

We assume, as before, that the modulus $q$ has $\poly(n)$-bounded prime factors. Perhaps the new decision problem is best understood for a prime modulus $q$. So let us assume, for now, that $q$ is a $\poly(n)$-bounded prime. Define the distribution $\tilde{\mu}_{\bm{s}, r}$ on the unit sphere $\SX$ by choosing $\bm{x} \in \Z_q^n$ and $t \in \Z_q {\setminus} \{ 0 \}$ uniformly at random and outputting the state
\begin{equation}
    \label{equ:new-dec}
    \ket{\phi_{\bm{s}, r}(\bm{x}, t)} = \frac{1}{\sqrt{r}} \sum_{j = 0}^{r - 1} \omega_q^{jt} \ket{j}\ket{\bm{x} + j\bm{s}}.
\end{equation}
The new decision problem is to distinguish between the distributions $\mu_{\bm{s}, r}$ and $\tilde{\mu}_{\bm{s}, r}$. The motivation behind this new definition is that from states of the form \eqref{equ:new-dec} we can efficiently obtain ``\textit{shifted}'' $\lwe$ samples $(\bm{a}, \lrang{\bm{a}, \bm{s}} + e + t)$ where $\bm{a} \in \Z_q^n$ is uniformly random and $e$ is sampled from $\mathcal{D}_{\Z, q / \lambda}$ for an appropriate value of $\lambda$. For a large enough $q$, this pair is closer to a uniformly random element of $\Z_q^n \times \Z_q$ than an $\lwe$ sample. An instance of the above decision problem then translates to an instance of the $\lwe$ decision problem.

A shifted $\lwe$ sample can be obtained from the state \eqref{equ:new-dec} using the technique in \cite{brakerski2018learning}, which we briefly explain in the following. We ignore the normalization factors in front superpositions for clarity. Applying the transform $\ket{j} \mapsto \ket{j - \lfloor (r - 1) / 2 \rfloor}$ to the first register of the state \eqref{equ:new-dec} we obtain the state
\begin{equation}
    \label{equ:symm-edcp}
    \sum_{j = -\lfloor (r - 1) / 2 \rfloor}^{\lceil (r - 1) / 2 \rceil} \omega_q^{jt} \ket{j}\ket{\bm{x} + j\bm{s}},
\end{equation}
where we have again denoted the uniformly random element $\bm{x} + \lfloor (r - 1) / 2 \rfloor \bm{s} \in \Z_q^n$ by $\bm{x}$. Next, using Lemma \ref{lem:qrs} we can transform \eqref{equ:symm-edcp} into 
\begin{equation}
    \label{equ:symm-eg}
    \sum_{j = -\lfloor (r - 1) / 2 \rfloor}^{\lceil (r - 1) / 2 \rceil} \omega_q^{jt} g_\lambda(j) \ket{j}\ket{\bm{x} + j\bm{s}},
\end{equation}
with probability $\Omega(\lambda / r)$. Here, $g_\lambda(x) = \exp(-\pi x^2 / \lambda^2)$ is a one-dimensional Gaussian function. We assume that $r$ is large enough so that the above probability is not too small. For example, $r = \lfloor \lambda\sqrt{n} \rfloor$ and so $\Omega(\lambda / r) = \Omega(1 / \sqrt{n})$. Now if we apply the transform $\qft_q \otimes \qft_{q^n}$ to \eqref{equ:symm-eg} and measure the last register we obtain the state
\[ \ket{\psi} = \sum_{y \in \Z_q} \sum_{j = -\lfloor (r - 1) / 2 \rfloor}^{\lceil (r - 1) / 2 \rceil} \omega_q^{j(\lrang{\bm{a}, \bm{s}} + y + t)} g_\lambda(j) \ket{y}, \]
where $\bm{a} \in \Z_q^n$ is uniformly random and known. In the following, we use the notation $\ket{\psi_1} \approx \ket{\psi_2}$ when the two quantum states $\ket{\psi_1}$ and $\ket{\psi_2}$ have an exponentially small trace distance. We have
\begin{align*}
    \ket{\psi}
    & \approx \sum_{y \in \Z_q} \sum_{j \in \Z} \omega_q^{j(\lrang{\bm{a}, \bm{s}} + y + t)} g_\lambda(j) \ket{y} \tag{by Corollary \ref{cor:gaus-apprx}} \\
    & = \sum_{y \in \Z_q} \sum_{j \in \Z} g_{1/\lambda} \Big( j + \frac{\lrang{\bm{a}, \bm{s}} + y + t}{q} \Big) \ket{y} \tag{by Theorem \ref{thm:poisson-sum}} \\
    & = \sum_{e \in \Z} g_{1/\sigma} \Big( \frac{e}{q} \Big) \ket{\lrang{-\bm{a}, \bm{s}} + e - t \bmod q} \tag{$e \leftarrow jq + \lrang{\bm{a}, \bm{s}} + t + y$} \\
    & \approx \sum_{e \in \Z_q} g_{1/\sigma} \Big( \frac{e}{q} \Big) \ket{\lrang{-\bm{a}, \bm{s}} + e - t \bmod q} \tag{by Corollary \ref{cor:gaus-apprx}}.
\end{align*}
Measuring the above state, we obtain a pair $(-\bm{a}, \lrang{-\bm{a}, \bm{s}} + e - t)$ where $e$ is sampled from $\mathcal{D}_{\Z, q / \lambda}$.

For a general modulus $q$, an immediate generalization of the above decision problem would be to just replace the prime modulus with a general one, and the above process of obtaining an $\lwe$ sample goes through without any change. However, for such a generalization, it is not clear how to reduce the search problem to the decision problem when $q$ is super-polynomially large in $n$. An alternative approach is to associate a distribution to each of the primes $p \mid q$, then the decision problem is to distinguish between these distribution and $\mu_{\bm{s}, r}$. We make this precise in the following.

\begin{definition}[EDCP, Decision]
    Let $p \mid q$ be a prime. Define the distribution $\mu_{\bm{s}, r, p}$ on the unit sphere $\SX$ by choosing $\bm{x} \in \Z_q^n$ and $t \in \Z_p {\setminus} \{ 0 \}$ uniformly at random and outputting the state
    \begin{equation}
        \label{equ:new-dec1}
        \ket{\phi_{\bm{s}, r}(\bm{x}, p, t)} = \frac{1}{\sqrt{r}} \sum_{j = 0}^{r - 1} \omega_p^{jt} \ket{j}\ket{\bm{x} + j\bm{s}}.
    \end{equation}
    The decision-$\edcp_{n, q, r}$ is the problem of distinguishing between the distribution $\mu_{\bm{s}, r}$ and any distribution in the set $\{ \mu_{\bm{s}, r, p} \}_{p \mid q}$.
\end{definition}

\begin{theorem}
    Assume that all the prime factors of $q$ are $\poly(n)$-bounded. Then there is a quantum polynomial-time reduction from solving search-$\edcp_{n, q, r}$ to solving decision-$\edcp_{n, q, r}$.
\end{theorem}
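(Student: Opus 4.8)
The plan is to recover $\bm{s}$ one base-$p$ digit at a time, for each prime $p \mid q$, and then reassemble $\bm{s} \bmod q$ by the Chinese remainder theorem --- much as in the proof of Theorem~\ref{thm:old-search-decision}, but exploiting the \emph{phase} structure of the new decision problem so that the coset state is only ever acted on by diagonal unitaries and never measured. This is precisely what removes the divisibility constraints between $r$ and $q$ that appeared in Theorem~\ref{thm:old-search-decision}.

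Fix a prime power $p^{e}\,\|\,q$ and let $D$ be an oracle distinguishing $\mu_{\bm{s},r}$ from $\mu_{\bm{s},r,p}$ with advantage $1/\poly(n\log q)$. The crucial observation is the following. Suppose we have already found $\tilde{s}_i := s_i \bmod p^{k-1}$ for all $i$ (vacuous when $k=1$), and we wish to learn the $k$-th digit $\hat{s}_i := \lfloor s_i / p^{k-1} \rfloor \bmod p$ of a fixed coordinate $s_i$. Given a candidate $a \in \Z_p$ and a uniformly random $\lambda \in \Z_p \setminus \{0\}$, take a fresh copy $\ket{\phi_{\bm{s},r}(\bm{x})}$ of $\mu_{\bm{s},r}$ and apply the diagonal transform
\[
\ket{j}\ket{\bm{y}} \ \longmapsto\ \omega_q^{(q/p^{k})\,\lambda\,(y_i - j\tilde{s}_i - jp^{k-1}a)}\,\ket{j}\ket{\bm{y}},
\]
which is a legitimate, efficiently computable phase gate because $p^{k} \mid q$. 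On the $j$-th term it multiplies by $\exp\bigl(\tfrac{2\pi i}{p^{k}}\lambda x_i\bigr)\cdot\omega_p^{\,j\lambda(\hat{s}_i - a)}$, using $s_i - \tilde{s}_i \equiv p^{k-1}\hat{s}_i \pmod{p^{k}}$; the first factor does not depend on $j$, so it is a global phase and may be ignored. Hence the state becomes $\ket{\phi_{\bm{s},r}(\bm{x}, p, \lambda(\hat{s}_i - a) \bmod p)}$ up to global phase. If $\hat{s}_i \equiv a \pmod p$ this is again a copy of $\mu_{\bm{s},r}$; if $\hat{s}_i \not\equiv a \pmod p$ then, since $\Z_p$ is a field and $\lambda$ is uniform over $\Z_p \setminus \{0\}$ and independent of $\bm{x}$, the scalar $\lambda(\hat{s}_i - a)$ is uniform over $\Z_p \setminus \{0\}$, so the resulting state is distributed exactly as $\mu_{\bm{s},r,p}$. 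Feeding the transformed state to $D$ therefore distinguishes $\hat{s}_i \equiv a$ from $\hat{s}_i \not\equiv a$ with advantage $1/\poly(n\log q)$.

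Given this test, the rest is routine. To decide whether $\hat{s}_i \equiv a$, I would run the above on $m = \Theta\bigl(n\cdot\poly(n\log q)\bigr)$ fresh samples, each with an independent $\lambda$, and take a majority vote; by Hoeffding's inequality (Lemma~\ref{lem:hoeffding}) this is correct except with probability $e^{-\Omega(n)}$, exactly as in Section~\ref{sec:err-red}. Since $p \le \poly(n)$, trying all $a \in \Z_p$ determines $\hat{s}_i$; looping over the coordinates $i = 1,\dots,n$ and the digits $k = 1,\dots,e$ yields $\bm{s} \bmod p^{e}$; looping over the at most $\log q$ prime powers $p^{e}\,\|\,q$ and combining by CRT yields $\bm{s}$. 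Altogether $D$ is invoked $\poly(n\log q)$ times on $\poly(n\log q)$ copies of $\mu_{\bm{s},r}$, so the procedure runs in quantum polynomial time, and a union bound gives the success probability demanded by search-$\edcp$ (applying beforehand, if desired, the standard rerandomization $\bm{s} \mapsto \bm{s} + \bm{w}$ via $\ket{j}\ket{\bm{y}} \mapsto \ket{j}\ket{\bm{y} + j\bm{w}}$ to pass from average-case to worst-case $\bm{s}$).

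The only step requiring genuine care --- rather than a real obstacle --- is the verification that in the ``wrong guess'' branch the transformed state has \emph{precisely} the distribution $\mu_{\bm{s},r,p}$: one must check that the secret is still $\bm{s}$ (which is why the known lower digits $\tilde{s}_i$ are folded into the phase rather than subtracted off the data register, so that $D$ always sees samples with the original secret) and that the induced phase parameter $t = \lambda(\hat{s}_i - a)$ is uniform over $\Z_p \setminus \{0\}$ and independent of $\bm{x}$. Everything else is bookkeeping; in particular no measurement is ever performed on the coset register, so $r$ does not shrink and no divisibility relation between $r$ and $q$ is needed.
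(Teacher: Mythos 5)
Your proposal is correct and follows essentially the same route as the paper's own proof: the diagonal phase unitary $\ket{j}\ket{\bm{y}} \mapsto \omega_{p^{k}}^{\lambda(y_i - j\tilde{s}_i - jp^{k-1}a)}\ket{j}\ket{\bm{y}}$ is exactly the paper's $U_{c,y,k}$ (with a shift of one in the digit index), and the digit-by-digit recovery plus CRT is the same. Your explicit verification that the induced phase parameter $t=\lambda(\hat{s}_i - a)$ is uniform over $\Z_p\setminus\{0\}$, and the Hoeffding amplification, only make explicit what the paper leaves implicit.
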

\begin{proof}
    Let $q = p_1^{e_1} \cdots p_\ell^{e_\ell}$ be the prime factorization of $q$. Let $D$ be an oracle for solving decision-$\edcp_{n, q, r}$. The idea is to use $D$ to find $\bm{s} \bmod p_i^{e_i}$ for all $i$ and then reconstruct $\bm{s} \bmod q$ using the Chinese remainder theorem. Let $\bm{s} = (s_1, \dots, s_n)$. We show how to recover $s_1 \bmod p_1^{e_1}$, the other values $s_i \bmod p_j^{e_j}$ can be recovered similarly. Set $p = p_1$ and $e = e_1$.
    For any $y \in \Z_p$ and nonzero $c \in \Z_p$ define the unitary  
    \[ U_{c, y} \ket{j}\ket{\bm{a}} = \omega_p^{(a_1 - jy)c}\ket{j}\ket{\bm{a}}, \]
    where $a_1$ is the first coordinate of $\bm{a}$. Given a sample $\rho_{\bm{s}, r}$ from $\mu_{\bm{s}, r}$, fix $y \in \Z_p$ and select a fresh nonzero $c \in \Z_p$ uniformly at random. Then we have
    \[ U_{c, y}\ket{\phi_{\bm{s}, r}(\bm{x})} = \frac{1}{\sqrt{r}} \omega_p^{x_1c}\sum_{j = 0}^{r - 1} \omega_p^{j(s_1 - y)c} \ket{j}\ket{\bm{x} + j\bm{s}}. \]
    Therefore, ignoring the global phase, if $s_1 \ne y \bmod p$ then $U_{c, y} \rho_{\bm{s}, r} U_{c, y}^*$ is a sample from $\mu_{\bm{s}, r, p}$, otherwise $U_{c, y} \rho_{\bm{s}, r} U_{c, y}^* = \rho_{\bm{s}, r}$. So, the oracle $D$ could tell us which is the case. Trying all $y \in \Z_p$ we can find $s_1 \bmod p$. Now assume we have recovered $\tilde{s}_1 = s_1 \bmod p^k$ for $k < e$. To compute $s_1 \bmod p^{k + 1}$, we can modify the unitary $U_{c, y}$ as 
    \[ U_{c, y, k} \ket{j}\ket{\bm{a}} = \omega_{p^{k + 1}}^{(a_1 - j\tilde{s}_1 - jp^ky)c}\ket{j}\ket{\bm{a}}. \]
    To see how $U_{c, y, k}$ acts on a sample $\rho_{\bm{s}, r}$ from $\mu_{\bm{s}, r}$, let $s_{1, k}$ be the $(k + 1)$-th digit of $s_1$ in base $p$. Then
    \begin{align*}
        U_{c, y, k} \ket{j}\ket{\bm{x} + j\bm{s}}
        & = \omega_{p^{k + 1}}^{(x_1 + js_1 - j\tilde{s}_1 - jp^ky)c}\ket{j}\ket{\bm{x} + j\bm{s}} \\
        & = \omega_{p^{k + 1}}^{x_1c} \omega_p^{j(s_{1, k} - y)c}\ket{j}\ket{\bm{x} + j\bm{s}}.
    \end{align*}
    Therefore, repeating the above procedure, we can recover $s_{1, k}$. This completes the proof.
\end{proof}


\section{Information-Theoretic and Hardness Bounds}
\label{sec:hardness}

In this section, we derive hardness bounds for $\edcp$ based on the number of samples. The $\edcp_{n, q, r}^m$ problem is to recover the secret $\bm{s} \in \Z_q^n$ given $m$ copies of the state $\rho_{\bm{s}, r}$. We consider bounds $O( n\log q / \log r)$, $\poly(n)$ and $2^{O(\sqrt{n \log q})}$ for $m$. As $m$ increases, $\edcp$ becomes easier. Figure \ref{fig:hardness-bounds} summarizes the hardness of $\edcp$ based on these bounds.  

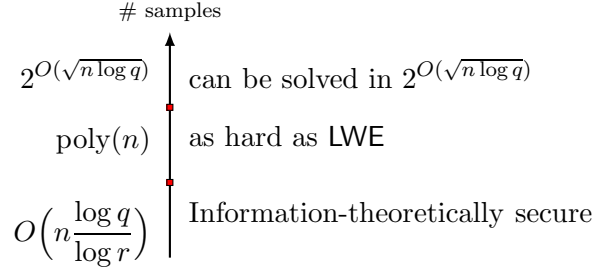
\begin{figure}
    \centering
    \begin{tikzpicture}[tick/.style = {draw = black, fill = red, inner sep = 0.4mm}]
        \draw [thick, -{latex}] (0, 0) -- (0, 3) node[above] {\scriptsize \# samples};
        \node [tick] (t1) at (0, 1) {};
        \node [tick] (t2) at (0, 2) {};

        \node [below left = 1mm of t1] {$\displaystyle O\Big( n \frac{\log q}{\log r} \Big)$};
        \node [below right = 1mm of t1] {Information-theoretically secure};
        \node [below left = 1mm of t2] {$\poly(n)$};
        \node [below right = 1mm of t2] {as hard as $\lwe$};
        \node [above left = 1mm of t2] {$\displaystyle 2^{O(\sqrt{n \log q})}$};
        \node [above right = 1mm of t2] {can be solved in $\displaystyle 2^{O(\sqrt{n \log q})}$};
    \end{tikzpicture}
    \caption{The hardness of $\edcp$ based on the number of samples. These bounds also hold for the security of the public-key encryption scheme of Section \ref{sec:public-key-enc}. In that case, a sample is a copy of the public key.}
    \label{fig:hardness-bounds}
\end{figure}

\subsection{Limited number of samples}
\label{sec:hardness-limited}

We derive a lower bound on $m$ using tools from quantum information theory. A special case of the result of this section was also obtained in \cite{bacon2005optimal}. To better understand the problem we can model it in the following standard way, for $m = 1$. Alice selects a uniformly random $\bm{s} \in \Z_q^n$ and stores it in the classical register $\mathsf{Y}$. She then prepares a register $\mathsf{X}$ in the state $\rho_{\bm{s}, r}$ and sends $\mathsf{X}$ to Bob. So, Bob has access to the register $\mathsf{X}$ that is prepared according to the ensemble
\[
\begin{array}{rrll}
    \eta: & \Z_q^n & \longrightarrow & \mathrm{Pos}(\X) \\
    & \bm{s} & \longmapsto & \frac{1}{q^n}\rho_{\bm{s}, r},
\end{array}
\]
where $\PosSemi(\X)$ is the space of positive semidefinite operators on $\X$. Bob picks a measurement $\mu: \Z_q^n \rightarrow \mathrm{Pos}(\X)$ and measures $\mathsf{X}$ according to $\mu$. He stores the outcome of the measurement in a classical register $\mathsf{Z}$. The pair $(\mathsf{Y}, \mathsf{Z})$ of classical registers will be distributed according to the distribution
\[
\begin{array}{rrll}
    q: & \Z_q^n \times \Z_q^n & \longrightarrow & [0, 1] \\
    & (\bm{s}, \bm{u}) & \longmapsto & \lrang{\mu(\bm{u}), \eta(\bm{s})}.
\end{array}
\]
The amount of information that Bob can learn about $\mathsf{Y}$ using the measurement $\mu$ is determined by the mutual information between $\mathsf{Y}$ and $\mathsf{Z}$, which we denote by $\operatorname{I}_\mu(\eta)$. The accessible information of the ensemble $\eta$ is defined as
\[ \operatorname{I}(\eta) = \sup_{\mu} \operatorname{I}_\mu(\eta), \]
where the supremum ranges over all choices of the measurement $\mu$. Note that the pair $(\mathsf{Y}, \mathsf{X})$ is in the classical-quantum state
\[ \sigma = \sum_{\bm{s} \in \Z_q^n} \ket{\bm{s}}\bra{\bm{s}} \otimes \eta(\bm{s}). \]
The quantum mutual information between $\mathsf{Y}$ and $\mathsf{X}$, with respect to the state $\sigma$, is called the Holevo information of the ensemble $\eta$ and is denoted by $\chi(\eta)$. We have
\begin{equation}
    \label{equ:holevo-chi}
    \chi(\eta) = \operatorname{I}(\mathsf{Y} : \mathsf{X}) = \entpy\Bigg( \frac{1}{q^n} \sum_{\bm{s} \in \Z_q^n} \rho_{\bm{s}, r} \Bigg) - \frac{1}{q^n} \sum_{\bm{s} \in \Z_q^n} \entpy(\rho_{\bm{s}, r}),
\end{equation}
where $\entpy(\cdot)$ is the von Neumann entropy.
\begin{theorem}[Holevo's theorem]
    \label{thm:holevo}
    Let $\Sigma$ be an alphabet and let $\X$ be a complex Euclidean space. For any ensemble $\eta: \Sigma \rightarrow \mathrm{Pos}(\X)$ it holds that $\operatorname{I}(\eta) \le \chi(\eta)$.
\end{theorem}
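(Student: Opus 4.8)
The plan is to identify the Holevo quantity $\chi(\eta)$ with the quantum mutual information of the classical--quantum state $\sigma$ introduced just before the theorem, to model Bob's measurement as a quantum channel acting only on the register $\mathsf{X}$, and then to apply the data-processing inequality for quantum mutual information. Concretely, write $\eta(x) = p_x\,\rho_x$ with $p_x = \tr(\eta(x))$ and $\rho_x$ a density operator, so that $\sigma = \sum_{x \in \Sigma}\ket{x}\bra{x}\otimes\eta(x)$ is a state on the pair $(\mathsf{Y},\mathsf{X})$; the computation leading to \eqref{equ:holevo-chi} already gives $\chi(\eta) = \operatorname{I}(\mathsf{Y}:\mathsf{X})_\sigma$.

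Next I would fix an arbitrary measurement $\mu$ with POVM elements $\{\mu(y)\}_y$ and let $\mathcal{M}$ be the quantum-to-classical channel $\tau \mapsto \sum_y \tr(\mu(y)\tau)\,\ket{y}\bra{y}$ mapping $\mathsf{X}$ to the classical outcome register $\mathsf{Z}$. Applying $\mathrm{id}_{\mathsf{Y}}\otimes\mathcal{M}$ to $\sigma$ produces $\sum_{x,y} p_x\,\tr(\mu(y)\rho_x)\,\ket{x}\bra{x}\otimes\ket{y}\bra{y}$, which is exactly the joint distribution $q$ of $(\mathsf{Y},\mathsf{Z})$ from the setup; since this is a fully classical state, its quantum mutual information coincides with the Shannon mutual information, i.e.\ $\operatorname{I}(\mathsf{Y}:\mathsf{Z}) = \operatorname{I}_\mu(\eta)$.

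It then remains to invoke the data-processing inequality: applying a channel to one share of a bipartite state cannot increase the quantum mutual information, so $\operatorname{I}_\mu(\eta) = \operatorname{I}(\mathsf{Y}:\mathsf{Z}) \le \operatorname{I}(\mathsf{Y}:\mathsf{X})_\sigma = \chi(\eta)$. Taking the supremum over all measurements $\mu$ yields $\operatorname{I}(\eta)\le\chi(\eta)$. The one substantive ingredient is the data-processing inequality, which is equivalent to the strong subadditivity of the von Neumann entropy; I would cite this (e.g.\ \cite{watrous2018theory}) rather than reprove it, and that is where the real difficulty of the statement sits — everything else is bookkeeping. If a more self-contained argument is wanted, one can instead dilate the measurement by adjoining an apparatus register $\mathsf{M}$ initialized to $\ket{0}$, implement the measurement as a unitary on $(\mathsf{X},\mathsf{M})$ and then discard $\mathsf{X}$, so that the proof uses only invariance of mutual information under unitaries together with its monotonicity under partial trace; but those facts themselves reduce to strong subadditivity, so the core obstacle is unchanged.
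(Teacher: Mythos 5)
Your proposal is correct: it is the standard textbook derivation of the Holevo bound, reducing everything to the data-processing inequality for quantum mutual information (equivalently, strong subadditivity of the von Neumann entropy), with the measurement modelled as a quantum-to-classical channel on the register $\mathsf{X}$ alone. Note, however, that the paper does not prove this theorem at all --- it is stated as a known result (Holevo's 1973 theorem) and used as a black box, with the identification $\chi(\eta) = \operatorname{I}(\mathsf{Y}:\mathsf{X})_\sigma$ already set up in the surrounding text exactly as in your first paragraph. So there is no in-paper argument to compare against; your write-up supplies the missing proof correctly, and your decision to cite rather than reprove data processing is the right call, since that is where all the analytic content lives.
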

Therefore, according to Theorem \ref{thm:holevo}, we can obtain an upper bound on the accessible information of the ensemble $\eta$ by computing $\chi(\eta)$. For an arbitrary number of samples $m$, we need to bound $\chi(\eta^{\otimes m})$ for the ensemble
\begin{equation}
    \label{equ:ensem-m}
    \begin{array}{rrll}
        \eta^{\otimes m}: & \Z_q^n & \longrightarrow & \mathrm{Pos}(\X^{\otimes m}) \\
        & \bm{s} & \longmapsto & \frac{1}{q^n}\rho_{\bm{s}, r}^{\otimes m}.
    \end{array}
\end{equation}
This is done is the following theorem.
\begin{theorem}
    \label{thm:acc-bound}
    For the ensemble $\eta^{\otimes m}$ in \eqref{equ:ensem-m} we have $\chi(\eta^{\otimes m}) \le m(1 - q^{-n})\log r$.
\end{theorem}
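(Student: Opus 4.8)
The plan is to bound the two terms in the Holevo quantity $\chi(\eta^{\otimes m})$ from \eqref{equ:holevo-chi} separately: I will show the first (von Neumann entropy of the average state) is at most $m\log r$, and the second (average of the individual entropies) is at least $m q^{-n}\log r$, and then subtract. The key observation is that each $\rho_{\bm{s},r}$ is \emph{not} pure — it is the average over $\bm{x} \in \Z_q^n$ of the pure states $\ket{\phi_{\bm{s},r}(\bm{x})}\bra{\phi_{\bm{s},r}(\bm{x})}$ — so its entropy is strictly positive, and that positivity is exactly what lets us beat the naive bound $m\log r$.

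First I would analyze $\rho_{\bm{s},r}$ itself. Writing $\ket{\phi_{\bm{s},r}(\bm{x})} = \frac{1}{\sqrt r}\sum_{j=0}^{r-1}\ket{j}\ket{\bm{x}+j\bm{s}}$ and averaging over $\bm{x}$, the cross terms $\ket{\bm{x}+j\bm{s}}\bra{\bm{x}+k\bm{s}}$ for $j\ne k$ survive the expectation over $\bm{x}$ only in structured fashion; in fact one computes $\rho_{\bm{s},r} = \frac{1}{r q^n}\sum_{j,k}\sum_{\bm{x}} \ket{j}\bra{k}\otimes\ket{\bm{x}+j\bm{s}}\bra{\bm{x}+k\bm{s}}$, and re-indexing $\bm{x}\mapsto \bm{x}-j\bm{s}$ shows this equals $\frac{1}{rq^n}\sum_{j,k}\ket{j}\bra{k}\otimes\sum_{\bm{x}}\ket{\bm{x}}\bra{\bm{x}+(k-j)\bm{s}}$. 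The support of $\rho_{\bm{s},r}$ therefore has dimension at most $q^n$ (one basis vector $\ket{j}\ket{\bm{x}+j\bm{s}}$ per value of $\bm{x}$, up to the coherences in the $j$-register), giving $\entpy(\rho_{\bm{s},r})\le \log q^n$ — but more usefully, I want a \emph{lower} bound. The cleanest route: $\rho_{\bm{s},r}$ decomposes as a direct sum / block structure indexed by the coset $\bm{x}+\langle\bm{s}\rangle$-type equivalence, each block being a rank-one projector onto a state in an $r$-dimensional (or $(\text{ord}(\bm{s}))$-dimensional) space scaled by the appropriate multiplicity; counting the number of such blocks and their weights yields $\entpy(\rho_{\bm{s},r}) \ge \log(q^n/\gcd\text{-factor})$, and in particular $\entpy(\rho_{\bm s,r})\ge \log r$ when $\bm{s}$ has order at least $r$ (which fails only for $\bm{s}=\bm 0$, a single value). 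For the all-zero $\bm s$, $\rho_{\bm 0,r} = \frac1{q^n}\sum_{\bm x}\ket{\phi}\bra{\phi}$ with all $\ket{\phi}$ equal to $\frac1{\sqrt r}(\sum_j\ket j)\otimes\ket{\bm x}$, so it is a rank-$q^n$ state with entropy exactly $\log q^n \ge \log r$ as well — so actually every $\bm s$ contributes at least $\log r$, but the theorem only needs the weaker fact that \emph{all but one} term is $\ge \log r$, which is why the factor $(1-q^{-n})$ appears. Hence $\frac{1}{q^n}\sum_{\bm s}\entpy(\rho_{\bm s,r}) \ge (1-q^{-n})\log r$, and by subadditivity of von Neumann entropy over the $m$ tensor factors, $\frac1{q^n}\sum_{\bm s}\entpy(\rho_{\bm s,r}^{\otimes m}) = m\cdot\frac1{q^n}\sum_{\bm s}\entpy(\rho_{\bm s,r}) \ge m(1-q^{-n})\log r$.

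For the first term, the state $\frac{1}{q^n}\sum_{\bm s}\rho_{\bm s,r}^{\otimes m}$ lives in $\X^{\otimes m}$ where $\X=\C^{\Z_r\times\Z_q^n}$, but crucially each $\rho_{\bm s,r}$ is supported on the span of $\{\ket j\ket{\bm y} : \bm y \equiv \bm x + j\bm s\}$; tracing through, the first register $\Z_r$ carries genuine entropy $\log r$ per copy while the $\Z_q^n$ register, once we condition appropriately, is pinned down — more carefully, the average state $\bar\rho := \frac1{q^n}\sum_{\bm s}\rho_{\bm s,r}$ should be shown to have $\entpy(\bar\rho) \le \log r + \log q^n$... which is too weak. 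So instead I would bound the first term via $\entpy\big(\frac1{q^n}\sum_{\bm s}\rho_{\bm s,r}^{\otimes m}\big) \le H(\text{uniform on }\Z_q^n) + \frac1{q^n}\sum_{\bm s}\entpy(\rho_{\bm s,r}^{\otimes m})$? No — that would cancel the wrong way. The right approach is the standard Holevo-quantity bound: $\chi(\eta^{\otimes m}) = \operatorname{I}(\mathsf Y:\mathsf X^{\otimes m}) \le \entpy(\mathsf X^{\otimes m})$, i.e. $\chi(\eta^{\otimes m})\le \entpy(\bar\rho^{\otimes m}) = m\,\entpy(\bar\rho)$, and separately $\chi(\eta^{\otimes m})\le \entpy(\mathsf Y) = n\log q$; but to get the clean $m(1-q^{-n})\log r$ I believe the intended argument is the complementary one: bound the first term of \eqref{equ:holevo-chi} by $m\log r$ using the fact that, after discarding the $\Z_q^n$ register (which only decreases entropy, or using that each $\rho_{\bm s,r}$'s reduced state on $\Z_r$ is maximally mixed so the $\Z_q^n$ part is redundant given the structure), $\entpy(\bar\rho) \le \log r$; then subtract the lower bound on the second term. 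The main obstacle I anticipate is precisely justifying $\entpy\big(\frac1{q^n}\sum_{\bm s}\rho_{\bm s,r}^{\otimes m}\big)\le m\log r$ — this requires recognizing that the average over $\bm s$ of the $\rho_{\bm s,r}$'s, despite living in a $(rq^n)^m$-dimensional space, is supported (or majorized by a distribution supported) on an effectively $r^m$-dimensional subspace, because averaging over both $\bm x$ and $\bm s$ makes the $\Z_q^n$-register uniform and independent of the $\Z_r$-register in a way that it can be "peeled off." Getting this dimension/support count exactly right, and handling the interaction of the $m$ tensor copies (which share the same $\bm s$ but independent-looking structure), is the crux; everything else is subadditivity and the entropy-of-a-pure-vs-mixed-state bookkeeping sketched above.
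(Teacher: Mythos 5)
Your high-level plan (split $\chi$ via \eqref{equ:holevo-chi}, handle the $m$ tensor factors by subadditivity/additivity, and reduce to a single-copy computation) has the right shape and matches the paper's first step. But both single-copy bounds you propose are incorrect, and the $(1-q^{-n})$ factor does not come from where you place it. First, every $\rho_{\bm{s},r}$ --- including $\bm{s}=\bm{0}$ --- is the uniform mixture of the $q^n$ pure states $\ket{\phi_{\bm{s},r}(\bm{x})}$, and these are pairwise \emph{orthogonal}: their inner product is $\frac{1}{r}\sum_{j}\,[\bm{x}+j\bm{s}=\bm{x}'+j\bm{s}]=[\bm{x}=\bm{x}']$. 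Hence $\entpy(\rho_{\bm{s},r})=n\log q$ exactly, for all $\bm{s}$; there is no case distinction on the order of $\bm{s}$, no ``$\ge\log r$ except for one bad $\bm{s}$'', and the factor $(1-q^{-n})$ has nothing to do with excluding $\bm{s}=\bm{0}$.

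Second, and more seriously, your key claim $\entpy(\bar\rho)\le\log r$ for $\bar\rho=q^{-n}\sum_{\bm{s}}\rho_{\bm{s},r}$ is false, so the ``crux'' you identify (an effectively $r^m$-dimensional support after averaging over $\bm{s}$) is not a gap to be filled but a statement that cannot be true. Averaging over $\bm{s}$ as well as $\bm{x}$ spreads the state over essentially the whole $rq^n$-dimensional space: conjugating by $\mathds{1}\otimes\qft_{q^n}$ one finds that $\bar\rho$ is $\frac{1}{rq^n}\,\mathds{1}\otimes(\mathds{1}-\ket{0}\bra{0})$ plus a single rank-one block on the $\bm{y}=0$ Fourier component, so its spectrum is $q^{-n}$ (multiplicity $1$), $(rq^n)^{-1}$ (multiplicity $(q^n-1)r$) and $0$ (multiplicity $r-1$), giving $\entpy(\bar\rho)=q^{-n}\,n\log q+(1-q^{-n})(\log r+n\log q)$, which is close to $n\log q+\log r$ and far exceeds $\log r$. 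The theorem holds because the $n\log q$ contributions of the two terms in \eqref{equ:holevo-chi} cancel, leaving $\chi(\eta)=(1-q^{-n})\log r$ exactly; consequently no decomposition with ``first term $\le m\log r$'' can exist. What your argument is missing is precisely this explicit diagonalization of $\bar\rho$ in the Fourier basis of the second register, which is the substantive content of the paper's proof.
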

\begin{proof}
    First, note that the entropy is additive with respect to tensor products, i.e., for any two states $\sigma_1$ and $\sigma_2$ it holds that $\entpy(\sigma_1 \otimes \sigma_2) = \entpy(\sigma_1) + \entpy(\sigma_2)$. It follows that $\entpy(\rho_{\bm{s}, r}^{\otimes m}) = m\entpy(\rho_{\bm{s}, r})$. Next, for a state $\sigma$ of a compound register $(\mathsf{X}_1, \cdots, \mathsf{X}_m)$, we have, by the subadditivity of von Neumann entropy,
    \[ \entpy(\sigma) \le \entpy(\tr_1(\sigma)) + \cdots + \entpy(\tr_m(\sigma))\]
    where $\tr_k(\sigma)$ is the reduction of $\sigma$ to the (state of the) register $\mathsf{X}_k$. Therefore,
    \begin{align*}
        \entpy\bigg( \frac{1}{q^n} \sum_{\bm{s} \in \Z_q^n} \rho_{\bm{s}, r}^{\otimes m} \bigg)
        & \le \sum_{i = 1}^m \entpy\bigg(\tr_i\bigg( \frac{1}{q^n} \sum_{\bm{s} \in \Z_q^n} \rho_{\bm{s}, r}^{\otimes m} \bigg)\bigg) \tag{by subadditivity of $\entpy$} \\
        & = \sum_{i = 1}^m \entpy\bigg(\frac{1}{q^n} \sum_{\bm{s} \in \Z_q^n} \tr_i(\rho_{\bm{s}, r}^{\otimes m}) \bigg) \tag{by linearity of $\tr$} \\
        & = m\entpy\bigg( \frac{1}{q^n} \sum_{\bm{s} \in \Z_q^n} \rho_{\bm{s}, r} \bigg).
    \end{align*}
    It follows from \eqref{equ:holevo-chi}  that $\chi(\eta^{\otimes m}) \le m\chi(\eta)$. Now, we can compute $\chi(\eta)$ by computing the eigenvalues of the operators $\rho_{\bm{s}, r}$ and $\rho = q^{-n}\sum_{\bm{s} \in \Z_q^n} \rho_{\bm{s}, r}$. The eigenvectors of $\rho_{\bm{s}, r}$ are
    \[ \ket{\psi_{\bm{x}, t}} = \frac{1}{\sqrt{r}} \sum_{j = 0}^{r - 1} \omega_r^{jt} \ket{j}\ket{\bm{x} + j\bm{s}}, \hspace*{2mm} (t, \bm{x}) \in \Z_r \times \Z_q^n, \]
    and the eigenvalues are $0$ and $q^{-n}$ with multiplicities $(r - 1)q^n$ and $q^n$, respectively. To compute the eigenvalues of $\rho$ it is best to write the second register in the Fourier basis. We have
    \begin{align*}
        (\mathds{1} \otimes \qft_{q^n}) \rho_{\bm{s}, r} (\mathds{1} \otimes \qft_{q^n})^*
        & = \E_{\bm{x} \in \U(\Z_q^n)} \Big[ (\mathds{1} \otimes \qft_{q^n}) \ket{\phi_{\bm{s}, r}(\bm{x})} \bra{\phi_{\bm{s}, r}(\bm{x})} (\mathds{1} \otimes \qft_{q^n})^* \Big] \\
        & = \frac{1}{rq^n} \sum_{\bm{y}, \bm{z} \in \Z_q^n} \E_{\bm{x} \in \U(\Z_q^n)} \Big[ \omega_q^{\lrang{\bm{x}, \bm{y} - \bm{z}}} \Big] \sum_{j, k \in \Z_r} \omega_q^{\lrang{j\bm{y} - k\bm{z}, \bm{s}}} \ket{j}\bra{k} \otimes \ket{\bm{y}}\bra{\bm{z}} \\
        & = \frac{1}{r} \E_{\bm{y} \in \U(\Z_q^n)} \Big[ \sum_{j, k \in \Z_r} \omega_q^{\lrang{(j - k)\bm{y}, \bm{s}}} \ket{j}\bra{k} \otimes \ket{\bm{y}}\bra{\bm{y}} \Big],
    \end{align*}
    where the last equality follows from the fact that
    \[
    \E_{\bm{x} \in \U(\Z_q^n)} \Big[ \omega_q^{\lrang{\bm{x}, \bm{y} - \bm{z}}} \Big] =
    \begin{cases}
        1 & \text{if } \bm{y} = \bm{z} \\
        0 & \text{if } \bm{y} \ne \bm{z}.
    \end{cases}
    \]
    Therefore, we have 
    \begin{align}
        (\mathds{1} \otimes \qft_{q^n}) \rho (\mathds{1} \otimes \qft_{q^n})^*
        & = \frac{1}{r} \E_{\bm{y} \in \U(\Z_q^n)} \Big[ \sum_{j, k \in \Z_r} \E_{\bm{s} \in \U(\Z_q^n)} \Big[ \omega_q^{\lrang{(j - k)\bm{y}, \bm{s}}} \Big] \ket{j}\bra{k} \otimes \ket{\bm{y}}\bra{\bm{y}} \Big] \nonumber \\
        & = \frac{1}{rq^n} \sum_{j, k \in \Z_r} \ket{j}\bra{k} \otimes \ket{0}\bra{0} + \frac{1}{rq^n} \mathds{1} \otimes (\mathds{1} - \ket{0}\bra{0}) \label{equ:f-basis}
    \end{align}
    where the second equality follows from
    \[
    \E_{\bm{s} \in \U(\Z_q^n)} \Big[ \omega_q^{\lrang{(j - k)\bm{y}, \bm{s}}} \Big] = 
    \begin{cases}
        1 & \text{if } (j - k)\bm{y} = 0 \\
        0 & \text{if } (j - k)\bm{y} \ne 0.
    \end{cases}
    \]
    The eigenvectors of \eqref{equ:f-basis} are
    \begin{align*}
        & \ket{\psi_{\bm{x}, t}} = \frac{1}{\sqrt{r}} \sum_{j = 0}^{r - 1} \omega_r^{jt} \ket{j}\ket{\bm{x}}, \hspace*{2mm} (t, \bm{x}) \in \Z_r \times \Z_q^n, \hspace*{1mm} \bm{x} \ne 0, \\
        & \ket{\psi_t} = \frac{1}{\sqrt{r}} \sum_{j = 0}^{r - 1} \omega_r^{jt} \ket{j}\ket{0}, \hspace*{2mm} t \in \Z_r,
    \end{align*}
    and the eigenvalues are $0$, $q^{-n}$ and $r^{-1}q^{-n}$ with multiplicities $r - 1$, $1$ and $(q^n - 1)r$, respectively. Finally, using \eqref{equ:holevo-chi} and the eigenvalues for $\rho$ and $\rho_{\bm{s}, r}$, we have
    \[ m\chi(\eta) \le m\Big( \frac{1}{q^n}\log(q^n) + \frac{r(q^n - 1)}{rq^n}\log(rq^n) - \log(q^n) \Big) = m\Big( 1 - \frac{1}{q^n} \Big)\log r. \qedhere \]
\end{proof}
Assume that Bob has found a measurement $\mu$ on $\eta^{\otimes m}$, i.e., a measurement that can operate on the joint state of $m$ copies of Alice's state, such that, after possibly some post-measurement processing, he can guess the value of $\mathsf{Y}$ with a constant probability $p$. Then a lower bound on $m$, that depends on $p$, can be computed using Theorem \ref{thm:acc-bound}. We need the following result known as Fano's inequality.
\begin{lemma}[Fano’s inequality]
    \label{lem:fano-ineq}
    Let $X$ and $Y$ be random variables on some finite set $\Gamma$, and let $\tilde{X} = f(Y)$ for some function $f$. Let $p = \Pr[X \ne \tilde{X}]$. Then it holds that
    \[ \entpy(X \vert Y) \le p\log(\abs{\Gamma} - 1) + \entpy(p, 1 - p). \]
\end{lemma}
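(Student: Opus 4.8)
The plan is to prove this by the standard argument: introduce an auxiliary ``error'' random variable and apply the chain rule for entropy in two different ways. Concretely, set $E = \mathds{1}[X \ne \tilde{X}]$, so that $E$ is a Bernoulli random variable with $\Pr[E = 1] = p$ and $\entpy(E) = \entpy(p, 1 - p)$. The key structural observation is that, since $\tilde{X} = f(Y)$, the variable $E$ is a \emph{deterministic} function of the pair $(X, Y)$; this is precisely where the hypothesis that $\tilde X$ depends only on $Y$ enters.

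First I would expand $\entpy(E, X \mid Y)$ via the chain rule in the two possible orders:
\[ \entpy(X \mid Y) + \entpy(E \mid X, Y) \;=\; \entpy(E, X \mid Y) \;=\; \entpy(E \mid Y) + \entpy(X \mid E, Y). \]
Because $E$ is determined by $(X, Y)$, the term $\entpy(E \mid X, Y)$ vanishes, leaving $\entpy(X \mid Y) = \entpy(E \mid Y) + \entpy(X \mid E, Y)$.

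Next I would bound the two terms on the right. For the first, conditioning cannot increase entropy, so $\entpy(E \mid Y) \le \entpy(E) = \entpy(p, 1 - p)$. For the second, I split on the value of $E$:
\[ \entpy(X \mid E, Y) = (1 - p)\,\entpy(X \mid Y, E = 0) + p\,\entpy(X \mid Y, E = 1). \]
On the event $E = 0$ we have $X = f(Y)$, so $X$ is a function of $Y$ and $\entpy(X \mid Y, E = 0) = 0$. On the event $E = 1$ we have $X \ne f(Y)$, so conditioned on any value of $Y$ the variable $X$ is supported on a set of size at most $\abs{\Gamma} - 1$, giving $\entpy(X \mid Y, E = 1) \le \log(\abs{\Gamma} - 1)$. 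Hence $\entpy(X \mid E, Y) \le p\log(\abs{\Gamma} - 1)$, and adding the two bounds yields $\entpy(X \mid Y) \le p\log(\abs{\Gamma} - 1) + \entpy(p, 1 - p)$, as claimed (with the usual convention $0 \cdot \log 0 = 0$ covering the degenerate case $\abs{\Gamma} = 1$).

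Since every step is either a textbook entropy identity (chain rule, nonnegativity, the ``conditioning reduces entropy'' inequality, and the bound $\entpy \le \log(\text{support size})$) or the immediate combinatorial remark about the support of $X$ on the error event, there is no genuine obstacle here; the only point requiring care is the verification that $E$ is a deterministic function of $(X, Y)$, which is exactly what makes the cross term drop out of the chain rule and is the reason the hypothesis $\tilde X = f(Y)$ is stated.
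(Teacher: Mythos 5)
Your proof is correct: it is the standard textbook derivation of Fano's inequality (define the error indicator $E = \mathds{1}[X \ne \tilde{X}]$, expand $\entpy(E, X \mid Y)$ by the chain rule in both orders, use that $E$ is determined by $(X,Y)$, and bound the two remaining terms by $\entpy(p,1-p)$ and $p\log(\abs{\Gamma}-1)$ respectively). The paper states this lemma without proof, treating it as a classical result, so there is nothing to compare against; every step of your argument is a valid entropy identity or inequality and the conclusion follows.
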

\begin{corollary}
    \label{cor:lower-b}
    Let $\bm{s} \in \Z_q^n$ be chosen uniformly at random. The number of copies of the state $\rho_{\bm{s}, r}$ needed to recover $\bm{s}$ with constant probability is at least $O(n\log q / \log r)$.
\end{corollary}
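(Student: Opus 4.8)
The plan is to combine Fano's inequality (Lemma~\ref{lem:fano-ineq}) with Holevo's theorem (Theorem~\ref{thm:holevo}) and the Holevo-information estimate of Theorem~\ref{thm:acc-bound}. I model the situation exactly as in Section~\ref{sec:hardness-limited}: Alice stores a uniformly random $\bm{s}\in\Z_q^n$ in a classical register $\mathsf{Y}$ and gives Bob $m$ copies of $\rho_{\bm{s},r}$, i.e.\ the state $\eta^{\otimes m}(\bm{s})$; Bob applies some measurement $\mu$ producing a classical register $\mathsf{Z}$, and then post-processes it into a guess $\tilde{\bm{s}}=f(\mathsf{Z})$ that equals $\bm{s}$ with probability $p$.

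First I would pin down the two ends of the mutual-information sandwich. Since $\bm{s}$ is uniform, $\entpy(\mathsf{Y})=n\log q$. Writing $p':=\Pr[\bm{s}\neq\tilde{\bm{s}}]\le 1-p$, Fano's inequality applied to $\tilde{\bm{s}}=f(\mathsf{Z})$, together with the data-processing/monotonicity of conditional entropy and $\entpy(p',1-p')\le 1$, gives
\[ \entpy(\mathsf{Y}\mid\mathsf{Z})\ \le\ \entpy(\mathsf{Y}\mid\tilde{\bm{s}})\ \le\ p'\log(q^n-1)+\entpy(p',1-p')\ \le\ (1-p)\,n\log q+1. \]
Hence the classical mutual information between $\mathsf{Y}$ and $\mathsf{Z}$ satisfies $\operatorname{I}(\mathsf{Y}:\mathsf{Z})=\entpy(\mathsf{Y})-\entpy(\mathsf{Y}\mid\mathsf{Z})\ge p\,n\log q-1$.

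Next I would bound the same quantity from above. Because $\mathsf{Z}$ is obtained from $\eta^{\otimes m}$ by the measurement $\mu$, we have $\operatorname{I}(\mathsf{Y}:\mathsf{Z})=\operatorname{I}_\mu(\eta^{\otimes m})\le\operatorname{I}(\eta^{\otimes m})$, and Holevo's theorem together with Theorem~\ref{thm:acc-bound} yields $\operatorname{I}(\eta^{\otimes m})\le\chi(\eta^{\otimes m})\le m(1-q^{-n})\log r\le m\log r$. Combining the lower and upper bounds gives $m\log r\ge p\,n\log q-1$, i.e.
\[ m\ \ge\ \frac{p\,n\log q-1}{\log r}\ =\ \Omega\!\left(\frac{n\log q}{\log r}\right) \]
for any constant success probability $p$, which is the claimed bound.

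I do not expect a genuine obstacle here; the work is in assembling the pieces cleanly. The two points that need a moment of care are (i) the data-processing step, so that post-measurement classical processing by $f$ cannot increase $\operatorname{I}(\mathsf{Y}:\mathsf{Z})$ relative to the measurement's accessible information, and (ii) checking that the slack introduced by $\entpy(p',1-p')\le 1$ and $(1-p)\log(q^n-1)\le(1-p)\,n\log q$ is harmless — it is, since we only want an asymptotic lower bound of order $n\log q/\log r$.
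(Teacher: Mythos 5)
Your proposal is correct and follows essentially the same route as the paper: lower-bound $\operatorname{I}_\mu(\eta^{\otimes m}) = \entpy(\mathsf{Y}) - \entpy(\mathsf{Y}\mid\mathsf{Z}) \ge p\,n\log q - 1$ via Fano's inequality, upper-bound it by $\chi(\eta^{\otimes m}) \le m(1-q^{-n})\log r$ via Holevo's theorem and Theorem~\ref{thm:acc-bound}, and compare. Your handling of the success/error probability distinction ($p'$ versus $1-p$) is in fact slightly cleaner than the paper's, which conflates the two notationally, but the argument is the same.
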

\begin{proof}
    Recall the communication scenario above: Alice selects $\bm{s} \in \Z_q^n$ uniformly at random and stores it in the register $\mathsf{Y}$. She then generates $m$ copies of the state $\rho_{\bm{s}, r}$ and sends them to Bob. On receiving $\rho_{\bm{s}, r}^{\otimes m}$, Bob applies a measurement $\mu$ and stores the measurement outcome in the register $\mathsf{Z}$. Bob might perform some post-processing on $\mathsf{Z}$ to obtain another register $\tilde{\mathsf{Z}}$. Assume that $p = \Pr[\mathsf{Y} \ne \tilde{\mathsf{Z}}]$ is a constant. Then
    \begin{align*}
        \operatorname{I}_\mu(\eta^{\otimes m})
        & = \operatorname{I}(\mathsf{X} : \mathsf{Z}) \\
        & = \entpy(\mathsf{Y}) - \entpy(\mathsf{Y} \vert \mathsf{Z}) \tag{by definition} \\
        & \ge n\log(q) - (1 - p)\log(q^n - 1) - \entpy(p, 1 - p) \tag{by Lemma \ref{lem:fano-ineq}} \\
        & \ge pn\log(q) - 1
    \end{align*}
    Now, by Theorem \ref{thm:acc-bound}, $m(1 - q^{-n})\log r \ge \chi(\eta^{\otimes m}) \ge \operatorname{I}_\mu(\eta^{\otimes m})$ which completes the proof.
\end{proof}
\begin{remark}
    \label{rmk:edcp-extreme}
    An interesting case for which the bound in Corollary \ref{cor:lower-b} is tight is when $q = r$. In this case, given the state $\rho_{\bm{s}, r}$, applying the transform $\qft_r^* \otimes \qft_{q^n}$ results in the state
    \[ \frac{1}{\sqrt{q^n}} \sum_{\bm{y} \in \Z_q^n} \omega_q^{\lrang{\bm{y}, \bm{x}}} \ket{\lrang{\bm{y}, \bm{s}}}\ket{\bm{y}}. \]
    Measuring this state, we obtain a linear equation $\lrang{\bm{y}, \bm{s}}$ where $\bm{y} \in \Z_q^n$ is uniformly random. We can solve for $\bm{s}$ by gathering $O(n)$ of these linear equations. 
\end{remark}

\subsection{Polynomial number of samples}
\label{sec:hardness-poly}

When the number of samples is $\poly(n)$, $\edcp$ is quantum polynomially equivalent to $\lwe$ \cite{brakerski2018learning}. The reduction form $\edcp$ to $\lwe$ is proved as in Section \ref{sec:new-decsn}. The reduction from $\lwe$ to $\edcp$ is based on the ball-intersection technique that was originally proposed by \cite{regev2004quantum}. We briefly review the reduction idea here and refer the reader to \cite{regev2004quantum, brakerski2018learning} for details.

Let $(\bm{A}, \bm{b}_0 = \bm{As}_0 + \bm{e}_0)$ be a set of $m$ samples from $\lwe_{n, q, \alpha}$, written in matrix form. We start by preparing the state 
\[ \sum_{\bm{s} \in \Z_q^n} \sum_{j = 0}^{r - 1} \ket{j}\ket{\bm{s}}, \]
where we have omitted the normalization factors for clarity. Here, $r$ is a function of $n$ and $q$. We then compute $(j, \bm{s}) \mapsto \bm{As} - j\bm{b}_0$ into an auxiliary register. After a change of variables we obtain the state
\begin{equation}
    \label{equ:latt-sup}
    \sum_{\bm{s} \in \Z_q^n} \sum_{j = 0}^{r - 1} \ket{j}\ket{\bm{s} + j\bm{s}_0}\ket{\bm{As} - j\bm{e}_0}.
\end{equation}
The goal is to project the above state onto a state $\sum_{\bm{s} \in \Z_q^n} \sum_{j = 0}^{r - 1} \ket{j}\ket{\bm{s} + j\bm{s}_0}$ for some $\bm{s} \in \Z_q^n$ with high probability. To do this, the idea is to draw $m$-dimensional balls around the points $\bm{As} - j\bm{e}_0$ for all $\bm{s} \in \Z_q^n$ and $j \in \Z_r$ and then select a random point in one of these balls. Let $\mathrm{B}_m(0, R)$ be a ball of radius $R$ around $0$. To implement the above idea, we can represent $\mathrm{B}_m(0, R)$ using points of a fine grid. More precisely, $\mathrm{B}_m(0, R)$ is represented by $\tilde{\mathrm{B}}_m(0, R) = \frac{1}{L} \Z^m \cap \mathrm{B}_m(0, R)$ for a large integer $L$. We can efficiently prepare (an approximation of) the superposition
\begin{equation}
    \label{equ:m-ball-sup}
    \ket{\tilde{\mathrm{B}}_m(0, R)} = \frac{1}{\sqrt{\tilde{\mathrm{B}}_m(0, R)}} \sum_{\bm{x} \in \tilde{\mathrm{B}}_m(0, R)} \ket{\bm{x}}.
\end{equation}
Note that for any $\bm{y} \in \Z_q^m$ we have $\bm{y} + \tilde{\mathrm{B}}_m(0, R) = \tilde{\mathrm{B}}_m(\bm{y}, R)$. Tensoring the states in \eqref{equ:latt-sup} and \eqref{equ:m-ball-sup} and adding the third register to the fourth register we obtain the state
\[ \sum_{\bm{s} \in \Z_q^n} \sum_{j = 0}^{r - 1} \ket{j}\ket{\bm{s} + j\bm{s}_0}\ket{\bm{As} - j\bm{e}_0}\ket{\tilde{\mathrm{B}}_m(\bm{As} - j\bm{e}_0, R)}. \]
For an appropriate choice of the radius $R$, for each $\bm{s} \in \Z_q^n$ the intersection $\cap_{j} \tilde{\mathrm{B}}_m(\bm{As} - j\bm{e}_0, R)$ is large, while $\tilde{\mathrm{B}}_m(\bm{As} - j\bm{e}_0, R) \cap \tilde{\mathrm{B}}_m(\bm{As}' - j'\bm{e}_0, R) = \emptyset$ for any $\bm{s} \ne \bm{s}'$ and any $j, j'$. Therefore, if we measure the last register we obtain the state
\[ \sum_{j = 0}^{r - 1} \ket{j}\ket{\bm{s} + j\bm{s}_0}\ket{\bm{As} - j\bm{e}_0} \]
for some random $\bm{s} \in \Z_q^n$, with probability $O(1 - 1 / \ell)$ where $\ell = \poly(n\log q)$. The last register can be uncomputed using the transform $\ket{j}\ket{\bm{x}}\ket{\bm{y}} \mapsto \ket{j}\ket{\bm{x}}\ket{\bm{y} - \bm{Ax} + j\bm{b}_0}$ to obtain the state $\sum_{j = 0}^{r - 1} \ket{j}\ket{\bm{s} + j\bm{s}_0}$.

The above procedure produces an $\edcp$ sample from $\lwe$ samples with a probability that is only polynomially close to $1$. This means we can obtain a polynomial number of $\edcp$ sample from a polynomial number of $\lwe$ samples, and that is the most we can do. In other words, producing a super-polynomial number of $\edcp$ samples from a super-polynomial number of $\lwe$ sample using the above procedure, can be done only with negligible probability. There is no known reduction from $\lwe$ to $\edcp$ for which the sample conversion probability is, for example, subexponentially close to $1$.

\subsection{Subexponential number of samples}
\label{sec:hardness-subexp}

When the number of samples is subexponential, $\edcp$ can be solved in time subexponential in $O(n\log q)$. This can be done using Kuperberg's algorithm \cite{kuperberg2005subexponential, kuperberg2011another}, which solves the hidden subgroup problem for the dihedral group $D_N$. The idea of the algorithm is to use a sieve on states of the form
\begin{equation}
    \label{equ:dih-coset}
    \frac{1}{\sqrt{2}}(\ket{0}\ket{x} + \ket{1}\ket{x + s}),
\end{equation}
where $x \in \Z_N$ is uniformly random, to recover the hidden shift $s \in \Z_N$. The state \eqref{equ:dih-coset} is called a dihedral coset state and the problem of recovering $s$, given such states, is called the Dihedral Coset Problem (DCP). The complexity of Kuperberg's algorithm is $2^{O(\sqrt{\log N})}$ for both time and space. Regev \cite{regev2004subexponential} improved the algorithm to use only $\poly(\log N)$ space at the cost of slightly increasing the running time to $2^{O(\sqrt{\log N \log\log N})}$.

Note that DCP is a special case of $\edcp_{n, q, r}$ where $n = 1$, $q = N$ and $r = 2$. Conversely, $\edcp$ can be reduced to vectorial variant of DCP which can be solved using a similar algorithm as in \cite{kuperberg2005subexponential}. We briefly explain the steps of the algorithm here.
\begin{theorem}
    \label{thm:subexp-smpl}
    Given $2^{O(\sqrt{n\log q})}$ samples, $\edcp_{n, q, r}$ can be solved in time $2^{O(\sqrt{n\log q})}$.
\end{theorem}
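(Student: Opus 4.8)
The plan is to reduce $\edcp_{n,q,r}$ to the dihedral hidden-shift problem over the finite abelian group $G = \Z_q^n$, whose order is $N = q^n$, and then run a Kuperberg-style collision sieve whose running time and sample complexity are $2^{O(\sqrt{\log N})} = 2^{O(\sqrt{n\log q})}$, following \cite{kuperberg2005subexponential,regev2004subexponential}. Note that no assumption on the shape of $q$ is needed here — unlike the search-to-decision reductions, the sieve works for arbitrary $q$.

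First I would apply Lemma \ref{lem:self-rd} with $r' = 2$: up to a constant-factor blow-up in the number of samples, every copy of $\rho_{\bm{s}, r}$ yields a copy of the dihedral coset state $\tfrac{1}{\sqrt 2}(\ket{0}\ket{\bm{x}} + \ket{1}\ket{\bm{x} + \bm{s}})$ with $\bm{x} \in \Z_q^n$ uniform. Applying $\qft_{q^n}$ to the second register and measuring it leaves a single-qubit ``labeled'' state $\ket{\psi_{\bm\ell}} = \tfrac{1}{\sqrt 2}(\ket{0} + \omega_q^{\lrang{\bm\ell,\bm{s}}}\ket{1})$, where the label $\bm\ell \in \Z_q^n$ is uniform and is read off from the measurement outcome. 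So from $2^{O(\sqrt{n\log q})}$ $\edcp$ samples we obtain that many labeled states with independent uniform labels, which is the input to the sieve.

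The sieve uses the following combination step: given $\ket{\psi_{\bm\ell_1}} \otimes \ket{\psi_{\bm\ell_2}}$, apply a controlled-NOT from the first to the second qubit and measure the second qubit; the outcome $b \in \{0,1\}$ occurs with probability $\tfrac12$ and leaves the state $\ket{\psi_{\bm\ell_1 + (-1)^b\bm\ell_2}}$ up to a global phase, where arithmetic on labels is coordinatewise modulo $q$. Writing each coordinate of a label in binary, so that a label is a string of $n\lceil \log_2 q\rceil$ bits, I would partition the bit positions into $O(\sqrt{n\log q})$ consecutive blocks of width $O(\sqrt{n\log q})$ and process them from the top down: in each round we group the current pool of labeled states by the value of the active block, combine pairs within a group to cancel that block (absorbing the bounded carries that reduction modulo $q$ produces), and pass the survivors to the next round. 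Choosing the block width to balance the per-round shrinkage of the pool against the number of rounds, a pool of size $2^{O(\sqrt{n\log q})}$ suffices to output, with constant probability, a labeled state whose label is concentrated on the coordinate $\bm{e}_1$, of the form $c\,\bm{e}_1$ with the higher bits cleared; measuring it in the appropriate Fourier basis yields information about $c\,s_1 \bmod q$. Re-running the sieve with different target coordinates and residues then recovers all of $\bm{s} \bmod q$ after $\poly(n\log q)$ runs, keeping the total cost $2^{O(\sqrt{n\log q})}$. (Alternatively, one can first split $q$ into prime-power factors and run the sieve separately over each $\Z_{p^e}^n$, reassembling $\bm{s}$ by the Chinese remainder theorem.)

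The hard part is the bookkeeping for a general composite modulus: cancelling a block of binary digits of the labels does not commute cleanly with reduction modulo $q$, so one must argue (i) that the sieve still makes steady, quantifiable progress round by round, and (ii) that the target characters needed to pin down every digit of every $s_i$ are reachable with non-negligible probability. Both points are handled exactly as in Kuperberg's and Regev's analyses of the dihedral and general abelian hidden-shift problems; vectorizing to $\Z_q^n$ only replaces $\log q$ by $n\log q$ in all the estimates, and the space-efficient variant of \cite{regev2004subexponential} gives the same $2^{O(\sqrt{n\log q})}$ time bound (up to a $\sqrt{\log\log}$ factor in the exponent) using only $\poly(n\log q)$ qubits.
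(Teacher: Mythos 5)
Your proposal is correct and follows essentially the same route as the paper: reduce to two-term coset states via Lemma \ref{lem:self-rd}, Fourier-sample the second register to get phase qubits labeled by uniform $\bm{y}\in\Z_q^n$, and run a CNOT-based Kuperberg sieve with block size tuned to $\Theta(\sqrt{n\log q})$. The only (immaterial) difference is that you sieve at the bit level throughout, whereas the paper first cancels whole coordinates ($\Theta(\sqrt{n/\log q})$ at a time, avoiding the carry bookkeeping) to reduce to one-dimensional DCP over $\Z_q$ and then invokes Kuperberg's algorithm there.
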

\begin{proof}
    Let $\bm{s} = (s_1, \dots, s_n)$. We will recover $s_n$, the rest of the $s_i$ can be recovered similarly. The proof proceeds in a sequence of simple reductions.
    \begin{enumerate}[leftmargin = *, font = \bfseries]
    \item From $\edcp_{n, q, r}$ to DCP over $\Z_q^n$: Given the distribution $\mu_{\bm{s}, r}$ we can efficiently sample from the distribution $\mu_{\bm{s}, 2}$ using Lemma \ref{lem:self-rd}. A sample from $\mu_{\bm{s}, 2}$ is of the form
    \[ \ket{\phi_{\bm{x}, 2}} = \frac{1}{\sqrt{2}}(\ket{0}\ket{\bm{x}} + \ket{1}\ket{\bm{x} + \bm{s}}), \]
    where $\bm{x} \in \Z_q^n$ is uniformly random. This is a dihedral coset state over the group $\Z_q^n$.
    \item From DCP over $\Z_q^n$ to DCP over $\Z_q$: measuring the second register of $(\mathds{1} \otimes \qft_{q^n}) \ket{\phi_{\bm{x}, 2}}$ we obtain the state
    \begin{equation}
        \label{equ:dih-coset-v}
        \ket{\phi_{\bm{y}}} = \frac{1}{\sqrt{2}}(\ket{0} + \omega_q^{\lrang{\bm{y}, \bm{s}}}\ket{1})
    \end{equation}
    where $\bm{y} \in \Z_q^n$ is the outcome of the measurement and is uniformly random. Given two such states $\ket{\phi_{\bm{y}_1}}$ and $\ket{\phi_{\bm{y}_2}}$, we can compute the state
    \[ \ket{\phi_{\bm{y}_1 - \bm{y}_2}} = \frac{1}{\sqrt{2}}(\ket{0} + \omega_q^{\lrang{\bm{y}_1 - \bm{y}_2, \bm{s}}}\ket{1}) \]
    with probability $1 / 2$ by measuring the second register of $\textsc{cnot} \ket{\phi_{\bm{y}_1}} \ket{\phi_{\bm{y}_2}}$. If $\bm{y}_1$ and $\bm{y}_2$ had the first $k$ coordinates in common then $\bm{y}_1 - \bm{y}_2$ would have $0$ in the first $k$ coordinates. From this, we can perform a sieve operation: prepare many states of the form \eqref{equ:dih-coset-v}, then pair the states that have $\bm{y}$ with the same first $k$ coordinates, and then perform the above operation to produce new states with $\bm{y}$ that have the first $k$ coordinates zeroed out. Repeating the same process on the new states produces states with $\bm{y}$ that have first $2k$ coordinates equal to $0$, and so on. The final output of this process is a state $\ket{\phi_{\bm{y}}}$ where $\bm{y} = (0, \dots, 0, y)$, i.e., the state
    \begin{equation}
        \label{equ:dih-sve}
        \ket{\phi_y} := \ket{\phi_{\bm{y}}} = \frac{1}{\sqrt{2}}(\ket{0} + \omega_q^{ys_n}\ket{1}).
    \end{equation}
    This is a DCP state over the group $\Z_q$.
    \item Kuperberg for DCP over $\Z_q$: From the states \eqref{equ:dih-sve} $s_n$ can be recovered using Kuperberg's algorithm.
    \end{enumerate}
    To analyze the above algorithm, suppose we start with $q^\ell$ states. Since we are zeroing out $k$ coordinates at each stage, there are $n / k$ stages. At any stage, if there are $c \cdot q^k$ states, it can be shown, using a simple application of Lemma \ref{lem:hoeffding}, that at least $c / 8 \cdot q^k$ states survive the sieve operation. Therefore, to have $\Theta(q^k)$ states remaining in the last stage, we must have $q^\ell 8^{-n / k} \ge q^k$, hence $\ell \ge k + 3n / (k\log q)$. To minimize the right hand side, we take $k \in \Theta(\sqrt{n / \log q})$, and therefore, we can take $\ell \in \Theta(\sqrt{n / \log q})$.
\end{proof}
\begin{remark}
    When $q = \poly(n)$, Kuperberg's algorithm is not very efficient for solving DCP over $\Z_q$. Instead, we can use a POVM called the Pretty Good Measurement (PGM) \cite{hausladen1994pretty}. Suppose we have prepared the states $\ket{\phi_{y_0}}, \dots, \ket{\phi_{y_t}}$, of the form \eqref{equ:dih-sve}, for some $t \ge \lceil \log q \rceil + 1$. The tensor product of these states is
    \[ \ket{\psi} := \bigotimes_{j = 0}^{t - 1} \frac{1}{\sqrt{2}}(\ket{0} + \omega_q^{y_js_n}\ket{1}) = \frac{1}{\sqrt{2^t}} \sum_{x \in \{0, 1\}^t} \omega_q^{\alpha(x)s_n} \ket{x}, \]
    where $\alpha(x) = x_0y_0 + \cdots + x_ty_t \bmod q$. Using PGM on the state $\ket{\psi}$, we can recover $s_n$ with constant probability \cite{bacon2005optimal}. The implementation of PGM, in this case, boils down to inverting the function $\alpha: \{0, 1\}^t \rightarrow \Z_q$, which can be done efficiently since $q = \poly(n)$.   
\end{remark}


\section{Quantum Public-Key Cryptosystem}
\label{sec:public-key-enc}

A quantum public-key cryptosystem, similar to a classical system, consists of three algorithms:
\begin{itemize}[itemsep = 1pt]
\item $\gen(1^n)$ generates a public-key $pk$ and a secret-key $sk$ based on the security parameter $n$.
\item $\enc(pk, m)$ outputs a ciphertext $c$ for a given public-key $pk$ and message $m$.
\item $\dec(sk, c)$ outputs a message $m$ for a given secret-key $sk$ and ciphertext $c$.
\end{itemize}
The output pair $(pk, sk)$ of the $\gen$ algorithm for a quantum system consists  of a quantum state and a classical state, respectively. In particular, the public-key $pk$ is a quantum state that is generated using a classical key $sk$. The algorithm $\enc$ encrypts the message $m$, which is classical information, using the quantum state $pk$. The output $c$ of $\enc$ is a quantum state. The algorithm $\dec$ uses the key $sk$ to decrypt the quantum state into a classical message $m$.

For the security parameter $n$, we set the parameters for public key system as follows. We choose a prime $p = \poly(n)$ and set $q = p^s$ for some integer $s > 0$. We also set $r = p^{s'}$ where $s' < s$. The reason for these choices of parameters is that the resulting encryption scheme is simpler and more efficient. More generally, one could select $q$ to be a positive integer with $\poly(n)$-bounded prime factors and $r = p^{s'} \vert q$ to be a proper divisor where $p$ is prime. In what follows, we describe our cryptosystem for encrypting a one-bit message $b \in \{ 0, 1 \}$. Since there is only one prime $p$, we drop the parameter $p$ in \eqref{equ:new-dec1} for clarity.

\begin{description}[leftmargin = *]
\item [$\gen(1^n)$:] Select $\bm{x}, \bm{s} \in \Z_q^n$ uniformly at random. Apply the transform $\qft_r \otimes \mathds{1}$ to the register $\ket{0} \ket{\bm{x}}$ to obtain the state $\ket{\psi} = \frac{1}{\sqrt{r}} \sum_{j = 0}^{r - 1} \ket{j} \ket{\bm{x}}$. Apply the transform $A_{\bm{s}}: \ket{j}\ket{\bm{x}} \mapsto \ket{j}\ket{\bm{x} + j\bm{s}}$ to $\ket{\psi}$ to obtain the state
\[ \ket{\phi_{\bm{s}, r}(\bm{x}, 0)} = \frac{1}{\sqrt{r}} \sum_{j = 0}^{r - 1} \ket{j} \ket{\bm{x} + j\bm{s}}. \]
Return the public-key, secret-key pair $(pk, sk) = (\ket{\phi_{\bm{s}, r}(\bm{x}, 0)}, \bm{s})$. 

\item [$\enc(pk = \rho_{\bm{s}, r, 0}, b \in \{ 0, 1 \})$:]  Select $t \in \Z_r {\setminus} \{0\}$ uniformly at random. Apply the transform $U: \ket{j}\ket{\bm{y}} \mapsto \omega_p^{btj}\ket{j}\ket{\bm{y}}$ to $\rho_{\bm{s}, r, 0}$ to obtain the state
\begin{align*}
    U \rho_{\bm{s}, r, 0} U^*
    & = U \E_{\bm{x} \leftarrow \Z_q^n} \Big[ \ket{\phi_{\bm{s}, r}(\bm{x}, 0)} \bra{\phi_{\bm{s}, r}(\bm{x}, 0)} \Big] U^* \\
    & = \E_{\bm{x} \leftarrow \Z_q^n} \Big[ U \ket{\phi_{\bm{s}, r}(\bm{x}, 0)} \bra{\phi_{\bm{s}, r}(\bm{x}, 0)} U^* \Big] \\
    & = \E_{\bm{x} \leftarrow \Z_q^n} \Big[ \ket{\phi_{\bm{s}, r}(\bm{x}, bt)} \bra{\phi_{\bm{s}, r}(\bm{x}, bt)} \Big] \\
    & = \rho_{\bm{s}, r, bt}.
\end{align*}
Return $\rho_{\bm{s}, r, bt}$.

\item [$\dec(sk = \bm{s}, c = \rho_{\bm{s}, r, bt})$:]  Apply the transform $S_{\bm{s}}: \ket{j}\ket{\bm{y}} \mapsto \ket{j}\ket{\bm{y} - j\bm{s}}$ to $\rho_{\bm{s}, r, bt}$. Discard the second register. Apply $\qft_r$ to the resulting state and measure. If the measurement result is 0 then output 0, otherwise output 1. 

\end{description}
\begin{lemma}[Correctness]
    For any bit $b \in \{ 0, 1 \}$ and all outputs $(\bm{s}, \rho_{\bm{s}, r, 0})$ of $\gen$, we have
    \[ \Pr [ \dec(\bm{s}, \enc(\rho_{\bm{s}, r, 0}, b)) = b ] = 1. \]
\end{lemma}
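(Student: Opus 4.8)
The plan is to fix the secret key $\bm s\in\Z_q^n$ and the plaintext bit $b$, and to follow a single ``branch'' of the public key through encryption and decryption. Concretely, the public-key operator $\rho_{\bm s,r,0}$ is the mixture $\E_{\bm x}\bigl[\ket{\phi_{\bm s,r}(\bm x,0)}\bra{\phi_{\bm s,r}(\bm x,0)}\bigr]$, and $\enc,\dec$ act linearly, so it suffices to prove that for every $\bm x\in\Z_q^n$ the computational-basis measurement at the end of $\dec$ returns $0$ when $b=0$ and a nonzero value when $b=1$, with both conclusions holding deterministically and independently of $\bm x$ (and of the value $t$ sampled by $\enc$). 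The claimed probability $1$ then follows immediately.

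The action of $\enc$ on the branch $\ket{\phi_{\bm s,r}(\bm x,0)}$ is already recorded in the description of $\enc$: applying $U:\ket{j}\ket{\bm y}\mapsto\omega_p^{btj}\ket{j}\ket{\bm y}$ produces $\ket{\phi_{\bm s,r}(\bm x,bt)}=\frac1{\sqrt r}\sum_{j=0}^{r-1}\omega_p^{jbt}\ket{j}\ket{\bm x+j\bm s}$. I then apply the decryption map $S_{\bm s}:\ket{j}\ket{\bm y}\mapsto\ket{j}\ket{\bm y-j\bm s}$, which cancels the shift and yields $\bigl(\frac1{\sqrt r}\sum_{j=0}^{r-1}\omega_p^{jbt}\ket{j}\bigr)\otimes\ket{\bm x}$. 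The second register is now the product state $\ket{\bm x}$, so discarding it leaves the single-register state $\ket{\chi}:=\frac1{\sqrt r}\sum_{j=0}^{r-1}\omega_p^{jbt}\ket{j}$, which no longer depends on $\bm x$; it remains to analyze the measurement of $\qft_r\ket{\chi}$.

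The one genuine computation is evaluating $\qft_r\ket{\chi}$, and the subtlety is that the phases $\omega_p^{jbt}$ are $p$-th roots of unity while $\qft_r$ is the transform over $\Z_r=\Z_{p^{s'}}$. I would use that $j\mapsto\omega_p^{jbt}$ has period $p$: writing $j=ap+c$ with $0\le c<p$ and $0\le a<p^{s'-1}$, the sum over $a$ inside $\qft_r\ket{\chi}$ is a geometric sum $\sum_{a=0}^{p^{s'-1}-1}\omega_{p^{s'-1}}^{ak}$, which vanishes unless $p^{s'-1}\mid k$, so $\qft_r\ket{\chi}$ is supported on the basis states $\ket{k'p^{s'-1}}$ with $k'\in\Z_p$. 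The remaining geometric sum over $c$ equals $p$ when $bt+k'\equiv 0\pmod p$ and $0$ otherwise, so in fact $\qft_r\ket{\chi}=\ket{k_0}$ with $k_0=\bigl((-bt)\bmod p\bigr)\,p^{s'-1}$. For $b=0$ this is $\ket{0}$, so $\dec$ measures $0$ and outputs $0$; for $b=1$ we have $k_0\ne 0$ precisely because $t$ is nonzero modulo $p$, so $\dec$ measures a nonzero value and outputs $1$. Both outcomes are deterministic and depend on neither $\bm x$ nor the sampled $t$, which is exactly what is needed.

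I expect the main obstacle to be getting the $\qft_r$ evaluation right in this mismatched-modulus setting: keeping track of which roots of unity survive after the change of variables $j=ap+c$, and isolating the hypothesis that forces $k_0\ne0$ when $b=1$, namely $bt\not\equiv0\pmod p$, which is guaranteed by the restriction on $t$ in $\enc$. The reduction to a single $\bm x$-branch and the shift-cancellation performed by $S_{\bm s}$ are routine.
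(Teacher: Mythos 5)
Your proof is correct and follows essentially the same route as the paper's: cancel the shift with $S_{\bm{s}}$, discard the now-disentangled second register, and observe that $\qft_r$ maps the remaining phase state to the basis state indexed by a multiple of $p^{s'-1}$ that is nonzero exactly when $b=1$ (the paper works at the density-operator level throughout while you track a single $\bm{x}$-branch, but the computation is the same up to the $\qft_r$ versus $\qft_r^*$ sign convention). Your isolation of the hypothesis $bt\not\equiv 0\pmod p$ is apt: the scheme needs $t$ nonzero modulo $p$, as in the paper's definition of the new decision problem, rather than merely a nonzero element of $\Z_r$ as the $\enc$ description literally states.
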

\begin{proof}
    Given the ciphertext $\rho_{\bm{s}, r, tb}$, the decryption steps are as follows
    \begin{align*}
        \rho_{\bm{s}, r, b}
        & \mapsto \E_{\bm{x} \leftarrow \Z_q^n} \Big[ S_{\bm{s}} \ket{\phi_{\bm{s}, r}(\bm{x}, bt)} \bra{\phi_{\bm{s}, r}(\bm{x}, bt)} S_{\bm{s}}^* \Big]  \tag{apply $S_{\bm{s}}$} \\
        & = \E_{\bm{x} \leftarrow \Z_q^n} \bigg[ \frac{1}{r} \sum_{k, j = 0}^{r - 1} \omega_p^{bt(j - k)}\ket{k}\ket{\bm{x}} \bra{j}\bra{\bm{x}} \bigg] \\
        & = \frac{1}{r} \sum_{k, j = 0}^{r - 1} \omega_p^{bt(j - k)}\ket{k}\bra{j} \otimes \E_{\bm{x} \leftarrow \Z_q^n} \Big[ \ket{\bm{x}}\bra{\bm{x}} \Big] \\
        & \mapsto \frac{1}{r} \sum_{k, j = 0}^{r - 1} \omega_p^{bt(j - k)}\ket{k}\bra{j} \tag{discard the second register} \\
        & \mapsto \qft_r \frac{1}{r} \sum_{k, j = 0}^{r - 1} \omega_p^{bt(j - k)}\ket{k}\bra{j} \qft_r^* \tag{apply quantum Fourier transform}\\
        & = \ket{btr/p} \bra{btr/p}
    \end{align*}
    If $b = 0$ then the above state is $\ket{0}\bra{0}$, otherwise it is $\ket{tr / p}\bra{tr / p} \ne \ket{0}\bra{0}$. 
\end{proof}

\paragraph{Discussion.}
The above encryption scheme can be naturally based on the original decision-$\edcp$ (Definition \ref{def:d-edcp}) as well. The resulting scheme, however, is not as efficient. To see this, suppose public keys are generated using the $\gen$ algorithm above. Bob encrypts a bit $b \in \{0, 1\}$ as follows: if $b = 0$ then he outputs the public key as the ciphertext. If $b = 1$ then he measures the first register of the public key to obtain a state $\ket{j}\ket{\bm{x}}$, for uniformly random $(j, \bm{x}) \in \Z_r \times \Z_q^n$, and outputs this state as the ciphertext. Distinguishing between the encryption of $0$ and $1$ is then equivalent to solving the decision-$\edcp$. Now, if Alice runs the above $\dec$ algorithm on the ciphertext she gets two possible outputs depending on the value of $b$:
\begin{itemize}
\item When $b = 0$, the output is always $0$, since the last state obtained in the algorithm is always $\ket{0}\bra{0}$ and so the measurement outcome is $0$.
\item When $b = 1$, the output is $1$ with probability $1 - 1 / r$. This is because the last state obtained in the algorithm, just before the last measurement, is $\qft_r \ket{j}\bra{j} \qft_r^*$. 
\end{itemize}
To decrease the above (rather large) decryption error down to, say, $2^{-\Omega(n)}$, the $\dec$ algorithm has to be repeated $\Omega(n / \log r)$ times. In the quantum setting, that means Alice has to have access to $\Omega(n / \log r)$ copies of the ciphertext, which in turn means Bob needs the same number of copies of the public key to generate the ciphertexts. This is equivalent to saying that for an encryption-decryption round with negligible error probability, the size of the public key increases by a factor of $\Omega(n / \log r)$. This scheme then has no advantage over a classical $\lwe$-based encryption scheme.

\subsection{Circuits}

All three algorithms $\gen, \enc, \dec$ in the above public-key cryptosystem are very easy to implement. In the following, we briefly describe the circuits for these algorithms. Since all the arithmetic unites used in the circuits are already known, we will not give a gate-level design for them.

Figure \ref{fig:gen-circuit} shows the key generation circuit. The gate $\qft_r$ is the quantum Fourier transform over $\Z_r$, and the gate $A_{\bm{s}}$ is the multiply-add transform $\ket{j}\ket{\bm{x}} \mapsto \ket{j}\ket{\bm{x} + j\bm{s}}$.

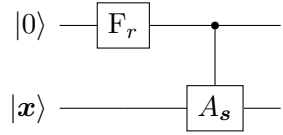
\begin{figure}[h]
    \centering
    \begin{quantikz}[thin lines]
        \lstick{$\ket{0}$} & \gate{\qft_r} & \ctrl{1} & \qw \\
        \lstick{$\ket{\bm{x}}$} & \qw  & \gate{A_{\bm{s}}} & \qw 
    \end{quantikz}
    \caption{The circuit for $\gen$}
    \label{fig:gen-circuit}
\end{figure}

For the encryption algorithm, we need to implement the transform $\ket{j}\ket{\bm{y}} \mapsto \omega_p^{btj}\ket{j}\ket{\bm{y}}$. We start by preparing the state
\[ (\mathds{1} \otimes \qft_p) \frac{1}{\sqrt{r}} \sum_{j = 0}^{r - 1} \ket{j} \ket{\bm{x} + j\bm{s}} \otimes \ket{1}  = \frac{1}{\sqrt{r}} \sum_{j = 0}^{r - 1} \ket{j} \ket{\bm{x} + j\bm{s}} \frac{1}{\sqrt{p}} \sum_{z \in \Z_p} \omega_p^z \ket{z}. \]
Then we apply the transform $T_{bt}: \ket{j}\ket{\bm{y}}\ket{z} \mapsto \ket{j}\ket{\bm{y}}\ket{z - jbt}$ to obtain the state
\[ \frac{1}{\sqrt{r}} \sum_{j = 0}^{r - 1} \ket{j} \ket{\bm{x} + j\bm{s}} \frac{1}{\sqrt{p}} \sum_{z \in \Z_p} \omega_p^z \ket{z - jbt} = \frac{1}{\sqrt{r}} \sum_{j = 0}^{r - 1} \omega_p^{btj} \ket{j} \ket{\bm{x} + j\bm{s}} \frac{1}{\sqrt{p}} \sum_{z \in \Z_p} \omega_p^z \ket{z}. \]
Finally, we measure the last register to obtain the desired state. Figure \ref{fig:enc-circuit} shows the encryption circuit.

\begin{figure}[h]
    \centering
    \begin{quantikz}[thin lines]
         \lstick{$\ket{j}$} & \qw & \ctrl{2} & \qw & \qw \\
         \lstick{$\ket{\bm{y}}$} & \qw & \qw & \qw & \qw \\
         \lstick{$\ket{1}$} & \gate{\qft_p} & \gate{T_{bt}} & \meter{} & \qw
    \end{quantikz}
    \caption{The circuit for $\enc$.}
    \label{fig:enc-circuit}
\end{figure}
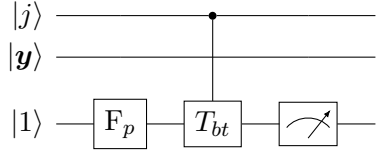

Figure \ref{fig:dec-circuit} shows the decryption circuit. The gate $\qft_r$ is the quantum Fourier transform over $\Z_r$, and the gate $S_{\bm{s}}$ is the multiply-subtract operation $\ket{j}\ket{\bm{x}} \mapsto \ket{j}\ket{\bm{x} - j\bm{s}}$.

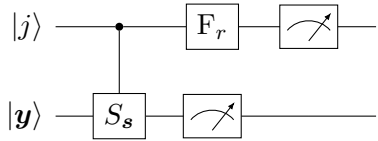
\begin{figure}[h]
    \centering
    \begin{quantikz}[thin lines]
        \lstick{$\ket{j}$} & \ctrl{1} & \gate{\qft_r} & \meter{} & \qw \\
        \lstick{$\ket{\bm{y}}$} & \gate{S_{\bm{s}}} & \meter{} & \qw & \qw
    \end{quantikz}
    \caption{The circuit for $\dec$.}
    \label{fig:dec-circuit}
\end{figure}

\subsection{A very efficient instantiation}

Although our cryptosystem is efficient even for a super-polynomial modulus $q = p^s$ and any $\poly(n)$-bounded prime $p$, it can be made more efficient by choosing a $\poly(n)$-bounded $q$ and a small prime $p$. In particular, we can choose $p = 2$ and $q = 2^s$ such that $q = \poly(n)$. In this case, we choose $r = 2^{s'}$ where $s' \ll s$.

For the above parameters, we have $\omega_p = -1$. In the encryption algorithm, since $t \ne 0$, the only choice for $t$ is $t = 1$ and, therefore, random number generation is not required. For an input bit $b$ the ciphertext state is
\[ \ket{\phi_{\bm{s}, r}(\bm{x}, b)} = \frac{1}{\sqrt{r}} \sum_{j = 0}^{r - 1} (-1)^{bj} \ket{j}\ket{\bm{x} + j\bm{s}}. \]
The phase $(-1)^{bj}$ is much simpler to compute than the more general phase $\omega_p^{btj}$. In particular, the quantum Fourier transform $\qft_p$ is now the Hadamard transform $H: \ket{x} \mapsto (\ket{0} + (-1)^x \ket{1}) / \sqrt{2}$, and the transform $T_{bt}$ is now $T_b: \ket{j}\ket{\bm{y}}\ket{z} \mapsto \ket{j}\ket{\bm{y}}\ket{z \oplus (jb \bmod 2)}$. Figure \ref{fig:enc-circuit-2} shows the new encryption circuit.

\begin{figure}[h]
    \centering
    \begin{quantikz}[thin lines]
         \lstick{$\ket{j}$} & \qw & \ctrl{2} & \qw & \qw \\
         \lstick{$\ket{\bm{y}}$} & \qw & \qw & \qw & \qw \\
         \lstick{$\ket{1}$} & \gate{H} & \gate{T_b} & \meter{} & \qw
    \end{quantikz}
    \caption{The circuit for $\enc$.}
    \label{fig:enc-circuit-2}
\end{figure}
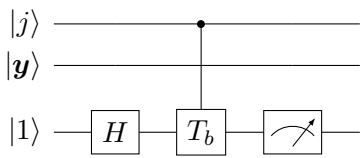

Let us briefly analyze the complexity of this scheme. The $\gen$ algorithm requires random generation, scalar multiplication and addition in $\Z_q^n$. These can be done at the cost of $O(n\log q \log\log q)$ qubit operations. The quantum Fourier transform over $\Z_r$ can be done in $O(\log r\log\log r)$ qubit operations \cite{hales2000improved}. The $\enc$ algorithm takes $O(1)$ since the $T_b$ operation takes $O(1)$. The $\dec$ algorithm has the same complexity as the $\gen$ algorithm.


\bibliographystyle{plain}
\bibliography{references}


\newpage
\appendix

\section{Poisson Summation}
Let $G$ be a locally compact abelian group, and let $\mathbb{T}$ be the circle group. The dual group $\hom(G, \mathbb{T})$ of all continuous group homomorphisms from $G$ to $\mathbb{T}$ is denoted by $\widehat{G}$. The operation in $\widehat{G}$ is pointwise multiplication, i.e., for $\chi_1, \chi_2 \in \widehat{G}$, $(\chi_1 . \chi_2)(x) = \chi_1(x)\chi_2(x)$. There is a natural topology on $\widehat{G}$, called the compact-open topology, that makes it a topological group. It can be shown that $\widehat{G}$ is a locally compact abelian group as well. In the representation theory language, $\widehat{G}$ is called the character group of $G$ and the element of $\widehat{G}$ are called characters. 

The group $G$ carries a Haar measure that unique up to a multiplicative positive constant. The space $L^1(G)$ is then defined according to the Haar measure. For a character $\chi \in \widehat{G}$, the Fourier transform of a function $f \in L^1(G)$ is defined by the Haar integral
\[ \hat{f}(\chi) = \int_{G} f(g)\overline{\chi(g)} dg. \]
Let $H \le G$ be a closed subgroup, so there is an exact sequence
\begin{equation}
    \label{equ:ses}
    0 \rightarrow H \rightarrow G \rightarrow G / H \rightarrow 0
\end{equation}
of topological groups. Applying the functor $\hom(-, \mathbb{T})$ to the above sequence, we obtain the exact sequence
\begin{equation}
    \label{equ:d-ses}
    0 \rightarrow \widehat{G / H} \rightarrow \widehat{G} \rightarrow \widehat{H} \rightarrow 0
\end{equation}
of duals. The Fourier transform is a linear map from the groups in \eqref{equ:ses} to the groups in \eqref{equ:d-ses}. The Poisson summation formula relates these Fourier transforms. We can always choose Haar measures on $H, G$, and $G / H$ such that the quotient integral identity
\[ \int_G f(g) dg = \int_{G / H} \int_H f(gh) dh d(gH) \]
holds for every compactly supported continuous function $f: G \rightarrow \C$.
\begin{theorem}[Poisson summation]
    \label{thm:poisson-sum}
    Let $H \le G$ be a closed subgroup and let $f \in L^1(G)$. Define $f_H \in L^1(G / H)$ by $f_H(gH) = \int_H f(gh) dh$. Then $\widehat{f_H} = \hat{f} \vert_{\widehat{G / H}}$, where $\vert$ is restriction. If $\hat{f} \vert_{\widehat{G / H}} \in L^1(\widehat{G / H})$ then we also have
    \[ \int_{H} f(gh) dh = \int_{\widehat{G / H}} \hat{f}(\chi) \chi(g) d\chi \]
    for almost all $g \in G$.
\end{theorem}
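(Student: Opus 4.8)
The plan is to derive everything from the quotient-integration (Weil) formula stated just above the theorem, together with the Fourier inversion theorem applied on the locally compact abelian group $G/H$. The organizing device is to identify $\widehat{G/H}$ with the annihilator $H^{\perp} = \{ \chi \in \widehat{G} : \chi\vert_H \equiv 1 \}$ inside $\widehat{G}$, which is precisely the image of the injection in the dual exact sequence \eqref{equ:d-ses}; under this identification a character $\chi$ of $G/H$ is the character $g \mapsto \chi(gH)$ of $G$, and $\hat{f}\vert_{\widehat{G/H}}$ means the ordinary restriction of $\hat{f}\colon \widehat{G} \to \C$ to this subgroup.

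First I would verify that $f_H$ is well defined and lies in $L^1(G/H)$. Applying the quotient-integration identity to $\abs{f} \in L^1(G)$ shows that $h \mapsto f(gh)$ is integrable on $H$ for almost every coset $gH$, that $gH \mapsto f_H(gH)$ is measurable, and that
\[ \int_{G/H} \abs{f_H(gH)}\, d(gH) \;\le\; \int_{G/H}\!\int_H \abs{f(gh)}\, dh\, d(gH) \;=\; \int_G \abs{f(g)}\, dg \;<\; \infty , \]
so $\opnorm{f_H}_1 \le \opnorm{f}_1$. This is a routine Fubini-type argument, but it is what makes $\widehat{f_H}$ meaningful.

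Next, the identity $\widehat{f_H} = \hat{f}\vert_{\widehat{G/H}}$ is a direct computation. Fix $\chi \in \widehat{G/H}$ and regard it as a character of $G$ trivial on $H$, so that $\chi(gH) = \chi(g) = \chi(gh)$ for every $h \in H$. Unfolding the definitions of the Fourier transform on $G/H$ and of $f_H$, and then applying the quotient-integration formula to $g \mapsto f(g)\overline{\chi(g)} \in L^1(G)$, we get
\[ \widehat{f_H}(\chi) = \int_{G/H} f_H(gH)\,\overline{\chi(gH)}\, d(gH) = \int_{G/H}\!\int_H f(gh)\,\overline{\chi(gh)}\, dh\, d(gH) = \int_G f(g)\,\overline{\chi(g)}\, dg = \hat{f}(\chi) . \]
This establishes the first assertion.

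For the last claim, assume $\hat{f}\vert_{\widehat{G/H}} = \widehat{f_H} \in L^1(\widehat{G/H})$. Then the Fourier inversion theorem on $G/H$ gives, for almost every coset $gH$,
\[ f_H(gH) = \int_{\widehat{G/H}} \widehat{f_H}(\chi)\, \chi(gH)\, d\chi , \]
and substituting $f_H(gH) = \int_H f(gh)\, dh$, $\widehat{f_H}(\chi) = \hat{f}(\chi)$, and $\chi(gH) = \chi(g)$ yields the stated formula, with ``almost every $gH$'' becoming ``almost every $g \in G$'' after pulling back along the quotient map $G \to G/H$. The one point that genuinely needs care — and the main obstacle — is the normalization of Haar measures: the inversion theorem on $G/H$ holds only for the Plancherel (dual) measure on $\widehat{G/H}$, so one must check that the measures on $G$, $H$, $G/H$ fixed by the quotient-integration identity induce on the dual side exactly the dual measures for which the inversion on $G/H$ and the computation of $\widehat{f_H}$ above are simultaneously consistent. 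This compatibility is the dual form of Weil's formula; once it is in place, the remaining steps are Fubini and elementary manipulation.
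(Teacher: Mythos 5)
The paper states this theorem without proof, citing it implicitly as a standard fact of abstract harmonic analysis, so there is no in-paper argument to compare against. Your proposal is the standard and correct proof: Weil's quotient-integration formula gives both the norm bound $\opnorm{f_H}_1 \le \opnorm{f}_1$ and the identity $\widehat{f_H} = \hat{f}\vert_{\widehat{G/H}}$ (via the annihilator identification $\widehat{G/H} \cong H^{\perp}$), and $L^1$-Fourier inversion on $G/H$ then yields the summation formula almost everywhere. The only points you gloss over are minor and routine: the quotient-integral identity is stated in the paper only for $C_c(G)$ and must be extended to $L^1(G)$ by density before you apply it to $\abs{f}$ and to $f\overline{\chi}$, and the dual measure $d\chi$ on $\widehat{G/H}$ must be taken to be the Plancherel measure associated to the quotient Haar measure on $G/H$ — which you correctly flag as the one normalization that needs to be pinned down.
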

An example of a locally compact group is $G = \R^n$ with Haar measure taken to be the usual Lebesgue measure. For each $\bm{u} \in \R^n$ the mapping $\chi_{\bm{u}}: \R^n \rightarrow \C$ defined by $\chi_{\bm{u}}(\bm{x}) = e^{2\pi i \lrang{\bm{u}, \bm{v}}}$ is a character of $G$. In fact, all the elements of $\widehat{G}$ are of this form, and we have $\R^n \simeq \widehat{\R^n}$ via the map $\bm{u} \mapsto \chi_{\bm{u}}$. Let $H = L$ where $L$ is a lattice. Then since $L$ and $\widehat{\R^n / L} = L^\perp$ are both discrete groups, the integrals in Theorem \ref{thm:poisson-sum} are just summations, so we obtain
\[ \sum_{\bm{x} \in L} f(\bm{x} + \bm{y}) = \frac{1}{\mathrm{Vol}(\R^n / L)}\sum_{\bm{x} \in L^\perp} \hat{f}(\bm{x})e^{2\pi i \lrang{\bm{x}, \bm{y}}}  \]


\section{Some Tools From Quantum Information}

\subsection{Rejection sampling}

Let $\bm{\pi} \in \R^n$ be such that $\opnorm{\bm{\pi}}_2 = 1$ and let $\bm{\varepsilon} \in \R^n$ be any vector such that $\bm{\varepsilon} \le \bm{\pi}$, i.e., $\varepsilon_k \le \pi_k$ for all $k$. Let
\begin{equation}
    \label{equ:qreject-1}
    \ket{\psi_{\bm{\pi}}} = \sum_{k = 1}^n \pi_k\zeta_k \ket{k}\ket{\alpha_k},
\end{equation}
where $\abs{\zeta_k} = 1$ for all $k$, and $\alpha_k$ is a function of $k$. The process of transforming the state \eqref{equ:qreject-1} into the state
\begin{equation}
    \label{equ:qreject-2}
    \ket{\psi_{\bm{\varepsilon}}} = \frac{1}{\opnorm{\bm{\varepsilon}}_2} \sum_{k = 1}^n \varepsilon_k\zeta_k \ket{k}\ket{\alpha_k},
\end{equation}
is called quantum rejection sampling \cite{ozols2013quantum}. Define the set of single-qubit operations
\[ R_{\bm{\varepsilon}}(k) = \frac{1}{\pi_k}
\left[
\begin{array}{ll}
    \sqrt{\abs{\pi_k}^2 - \varepsilon_k^2} & -\varepsilon_k \\
    \varepsilon_k &\sqrt{\abs{\pi_k}^2 - \varepsilon_k^2}
\end{array}
\right]
, \quad 1 \le k \le n,
\]
and let $R_{\bm{\varepsilon}} = \sum_{k = 1}^n \ket{k}\bra{k} \otimes \mathds{1} \otimes R_{\bm{\varepsilon}}(k)$. Then we have
\[ R_{\bm{\varepsilon}} \ket{\psi_{\bm{\pi}}}\ket{0} = \sum_{k = 1}^n \zeta_k \ket{k}\ket{\alpha_k} \Big( \sqrt{\abs{\pi_k}^2 - \varepsilon_k^2}\ket{0} + \varepsilon_k\ket{1} \Big). \]
If we measure the last register, the probability of obtaining the state \eqref{equ:qreject-2} is $\sum_{k = 1}^n \abs{\zeta_k\varepsilon}^2 = \lVert \bm{\varepsilon} \rVert_2^2$. We state this result in the following lemma for the sake of reference.
\begin{lemma}
    \label{lem:qrs}
    The state conversion $\ket{\psi_{\bm{\pi}}} \mapsto \ket{\psi_{\bm{\varepsilon}}}$ can be done with probability $\opnorm{\displaystyle \bm{\varepsilon}}_2^2$.
\end{lemma}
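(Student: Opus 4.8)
The plan is to prove Lemma~\ref{lem:qrs} constructively: exhibit a single unitary together with one ancilla qubit and a measurement, and then verify that the branch corresponding to one measurement outcome is exactly $\ket{\psi_{\bm\varepsilon}}$, occurring with probability $\opnorm{\bm\varepsilon}_2^2$. The unitary is the controlled rotation $R_{\bm\varepsilon} = \sum_{k=1}^n \ket{k}\bra{k} \otimes \mathds{1} \otimes R_{\bm\varepsilon}(k)$ already written above, so most of the work is bookkeeping.

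First I would check that each $2\times 2$ block $R_{\bm\varepsilon}(k)$ is well defined and unitary. Well-definedness of the real square root $\sqrt{\abs{\pi_k}^2 - \varepsilon_k^2}$ needs $\varepsilon_k^2 \le \abs{\pi_k}^2$, which is exactly where the hypothesis $\bm\varepsilon \le \bm\pi$ (under the implicit sign convention $0 \le \varepsilon_k \le \pi_k$) is used; unitarity is then a one-line check that the columns are orthonormal, using $(\abs{\pi_k}^2 - \varepsilon_k^2) + \varepsilon_k^2 = \abs{\pi_k}^2$ together with the antidiagonal cancellation $-\varepsilon_k \sqrt{\abs{\pi_k}^2-\varepsilon_k^2} + \sqrt{\abs{\pi_k}^2-\varepsilon_k^2}\,\varepsilon_k = 0$. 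Being block diagonal with unitary blocks, $R_{\bm\varepsilon}$ is unitary.

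Next I would apply $R_{\bm\varepsilon}$ to $\ket{\psi_{\bm\pi}}\ket{0}$. Since $R_{\bm\varepsilon}(k)\ket{0} = \tfrac{1}{\pi_k}\bigl(\sqrt{\abs{\pi_k}^2 - \varepsilon_k^2}\,\ket{0} + \varepsilon_k\ket{1}\bigr)$, the amplitude $\pi_k$ of the $k$-th term of $\ket{\psi_{\bm\pi}}$ cancels, giving
\[ R_{\bm\varepsilon}\ket{\psi_{\bm\pi}}\ket{0} = \sum_{k=1}^n \zeta_k \ket{k}\ket{\alpha_k}\bigl(\sqrt{\abs{\pi_k}^2 - \varepsilon_k^2}\,\ket{0} + \varepsilon_k\ket{1}\bigr). \]
Measuring the ancilla in the computational basis, the outcome $1$ occurs with probability $\sum_{k} \abs{\zeta_k \varepsilon_k}^2 = \sum_k \varepsilon_k^2 = \opnorm{\bm\varepsilon}_2^2$ (using $\abs{\zeta_k} = 1$ and $\varepsilon_k \in \R$), and the normalized post-measurement state on the first two registers is $\tfrac{1}{\opnorm{\bm\varepsilon}_2}\sum_k \varepsilon_k\zeta_k\ket{k}\ket{\alpha_k} = \ket{\psi_{\bm\varepsilon}}$, which is \eqref{equ:qreject-2}. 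If the outcome is $0$ one simply reports failure; the statement asserts only a success probability.

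I do not expect a genuine obstacle here, since the lemma merely repackages the computation in the paragraph preceding it. The only points that require care are (i) confirming that $R_{\bm\varepsilon}(k)$ is really a unitary with real entries, which is where the relation between $\bm\varepsilon$ and $\bm\pi$ enters, and (ii) tracking the unit-modulus phases $\zeta_k$ so that they cancel consistently between the probability computation and the description of the surviving state. Both are routine.
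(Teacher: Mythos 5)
Your proof is correct and follows exactly the paper's own argument: the paper's ``proof'' of Lemma~\ref{lem:qrs} is precisely the paragraph preceding it, which constructs the same controlled rotation $R_{\bm{\varepsilon}}$, applies it to $\ket{\psi_{\bm{\pi}}}\ket{0}$, and measures the ancilla to obtain $\ket{\psi_{\bm{\varepsilon}}}$ with probability $\opnorm{\bm{\varepsilon}}_2^2$. Your added checks (unitarity of each block, the sign convention $0 \le \varepsilon_k \le \pi_k$, and the correct index in $\sum_k \abs{\zeta_k \varepsilon_k}^2$, which the paper actually misprints) are welcome but do not change the route.
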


\subsection{Norms and Gaussians}

Let $\X = \C^n$. For a quantum state $\sigma \in \mathrm{D}(\X)$, the trace norm is defined as
\[ \opnorm{\sigma}_1 = \tr\big(\sqrt{\sigma^*\sigma}\big). \]
This is the Schatten $p$-norm for $p = 1$. Let $\bm{u}, \bm{v} \in \SX$ be unit vectors, and let
\[ \ket{\psi_{\bm{u}}} = \sum_{k = 1}^n u_k \ket{\alpha_k}, \quad \ket{\psi_{\bm{v}}} = \sum_{k = 1}^n v_k \ket{\alpha_k} \]
be quantum states, where $\{\ket{\alpha_k}\}$ is an orthonormal set, and $v_k$ and $u_k$ are the coordinates of $\bm{u}$ and $\bm{v}$, respectively. Then the following inequality holds between the trace norm and the $\ell_1$ norm.
\begin{equation}
    \label{equ:l1-trace}
    \opnorm{\ket{\psi_{\bm{u}}}\bra{\psi_{\bm{u}}} - \ket{\psi_{\bm{v}}}\bra{\psi_{\bm{v}}}}_1 \le \opnorm{\bm{u} - \bm{v}}_1.
\end{equation}
The inequality \eqref{equ:l1-trace} can be used to approximate quantum states that involve Gaussians. In particular, let $g_r(x) = \exp(-\pi x^2 / r^2)$ be a one-dimensional Gaussian. For any set of complex numbers $\{\zeta_k\}_{k \in \Z}$ on the unit circle and any subset $A \subseteq \Z$ define
\[ g_r(A) = \sum_{k \in A}g_r(k), \quad  \tilde{g}_r(A) = \sum_{k \in A} \zeta_k g_r(k). \]
\begin{lemma}
    \label{lem:trfr-bound-c}
    For any $\kappa, r > 0$, we have $\abs{\tilde{g}_r(\Z {\setminus} [-\sqrt{\kappa}r, \sqrt{\kappa}r])} \le 2^{-\Omega(\kappa)} g_r(\Z)$. 
\end{lemma}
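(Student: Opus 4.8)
The plan is to bound the Gaussian tail sum $\sum_{|k| > \sqrt{\kappa}\,r} g_r(k)$ directly, since the phases $\zeta_k$ drop out after the triangle inequality: $|\tilde g_r(\Z\setminus[-\sqrt\kappa r,\sqrt\kappa r])| \le \sum_{|k|>\sqrt\kappa r} |\zeta_k|\,g_r(k) = \sum_{|k|>\sqrt\kappa r} g_r(k) = g_r(\Z\setminus[-\sqrt\kappa r,\sqrt\kappa r])$. So it suffices to show $g_r(\Z\setminus[-\sqrt\kappa r,\sqrt\kappa r]) \le 2^{-\Omega(\kappa)}\,g_r(\Z)$, a purely real statement about the one-dimensional Gaussian $g_r(x)=\exp(-\pi x^2/r^2)$.

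First I would write the tail as $2\sum_{k > \sqrt\kappa r} e^{-\pi k^2/r^2}$ (the $k=0$ term is interior, and the sum is symmetric). The standard trick is to compare $k^2$ against a linear function on the tail: for $k \ge \sqrt\kappa\, r$ we have $k^2 \ge \sqrt\kappa\, r \cdot k$, hence $e^{-\pi k^2/r^2} \le e^{-\pi\sqrt\kappa\, k/r}$, and the resulting geometric series sums to $e^{-\pi\sqrt\kappa\,\lceil\sqrt\kappa r\rceil/r}/(1 - e^{-\pi\sqrt\kappa/r})$. This already gives a bound of the shape $e^{-\pi\kappa}\cdot \frac{1}{1-e^{-\pi\sqrt\kappa/r}}$ on the tail. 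The denominator is the only nuisance: when $r$ is large compared to $\sqrt\kappa$ it behaves like $r/(\pi\sqrt\kappa)$, which is only polynomially large, so it is absorbed into the $2^{-\Omega(\kappa)}$ as long as we are slightly generous — e.g. split $k^2 \ge \tfrac12\sqrt\kappa r k + \tfrac12 k^2$ and keep a residual $e^{-\pi k^2/(2r^2)}$ to make the series converge geometrically with a ratio bounded away from $1$ after one more comparison, or simply note $\sum_{k>\sqrt\kappa r} e^{-\pi\sqrt\kappa k/r} \le \int_{\sqrt\kappa r - 1}^\infty e^{-\pi\sqrt\kappa x/r}\,dx = \frac{r}{\pi\sqrt\kappa}e^{-\pi\sqrt\kappa(\sqrt\kappa r-1)/r} = O(r/\sqrt\kappa)\cdot e^{-\pi\kappa}$.

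For the denominator $g_r(\Z)$ in the claimed inequality, I would just use the trivial lower bound $g_r(\Z) \ge g_r(0) = 1$; there is no need for the Poisson-summation value $g_r(\Z) \approx r$. Then $|\tilde g_r(\Z\setminus[-\sqrt\kappa r,\sqrt\kappa r])| \le O(r/\sqrt\kappa)\,e^{-\pi\kappa} \le 2^{-\Omega(\kappa)} = 2^{-\Omega(\kappa)}\,g_r(0) \le 2^{-\Omega(\kappa)}\,g_r(\Z)$, provided we interpret $2^{-\Omega(\kappa)}$ as absorbing the polynomial factor $r/\sqrt\kappa$ — which is legitimate in the regime of interest where $r$ is polynomially bounded, or more cleanly by first handling the case $r \le e^{\kappa}$ (say) so that $\log r = O(\kappa)$ and the polynomial factor is genuinely $2^{O(\kappa)}$ wait — that still isn't small. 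The honest fix is to use a slightly sharper exponent: from $k^2 \ge \kappa r^2$ on the tail one also has $k^2 \ge \tfrac12 k^2 + \tfrac12\kappa r^2$, so $g_r(k) \le e^{-\pi\kappa/2} e^{-\pi k^2/(2r^2)} = e^{-\pi\kappa/2} g_{r\sqrt2}(k)$, whence the tail is at most $e^{-\pi\kappa/2} g_{r\sqrt2}(\Z) \le e^{-\pi\kappa/2}\cdot O(r)$, and now choosing to express the bound against $g_r(\Z)$ rather than $1$ absorbs the $O(r)$: $g_{r\sqrt2}(\Z) = O(g_r(\Z))$, giving $|\tilde g_r(\cdots)| \le O(e^{-\pi\kappa/2})\,g_r(\Z) = 2^{-\Omega(\kappa)} g_r(\Z)$. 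The main obstacle, then, is purely bookkeeping: keeping a residual Gaussian factor so that the leftover sum is comparable to $g_r(\Z)$ rather than to the constant $1$, which is exactly why the lemma is stated with $g_r(\Z)$ on the right-hand side in the first place.
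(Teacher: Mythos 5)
Your proposal is correct and matches the paper's proof in its essential step: the triangle inequality removes the unimodular phases $\zeta_k$ and reduces the claim to the unsigned tail bound $g_r(\Z {\setminus} [-\sqrt{\kappa}r, \sqrt{\kappa}r]) \le 2^{-\Omega(\kappa)} g_r(\Z)$, which the paper simply cites as Lemma \ref{lem:trfr-bound} (from \cite{brakerski2018learning}) while you re-derive it. Your re-derivation is sound; the one assertion you leave implicit, $g_{r\sqrt{2}}(\Z) = O(g_r(\Z))$, follows from Poisson summation ($g_s(\Z) = s\, g_{1/s}(\Z)$), which gives $g_{r\sqrt{2}}(\Z) \le \sqrt{2}\, g_r(\Z)$.
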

\begin{proof}
    \begin{align*}
        \abs{\tilde{g}_r(\Z {\setminus} [-\sqrt{\kappa}r, \sqrt{\kappa}r])}
        & \le \sum_{k \in \Z {\setminus} [-\sqrt{\kappa}r, \sqrt{\kappa}r]} \abs{\zeta_k g_r(k)} \\
        & = g_r(\Z {\setminus} [-\sqrt{\kappa}r, \sqrt{\kappa}r]) \\
        & \le 2^{-\Omega(\kappa)} g_r(\Z),
    \end{align*}
    where the last inequality follows from Lemma \ref{lem:trfr-bound}.
\end{proof}
\begin{lemma}[{\cite[Lemma 1]{brakerski2018learning}}]
    \label{lem:trfr-bound}
    For any $\kappa, r > 0$, we have $g_r(\Z {\setminus} [-\sqrt{\kappa}r, \sqrt{\kappa}r]) \le 2^{-\Omega(\kappa)} g_r(\Z)$. 
\end{lemma}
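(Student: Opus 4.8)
The plan is to dominate the tail by one elementary pointwise inequality and then reduce everything to comparing two Gaussian theta-sums of different widths. Write $N = \sqrt{\kappa}\,r$, so that the left-hand side is $g_r(\Z \setminus [-N, N]) = \sum_{|k| > N} g_r(k)$ and every such integer $k$ satisfies $k^2/r^2 > \kappa$.

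First I would use the splitting $\pi k^2/r^2 = \tfrac{\pi}{2}\,k^2/r^2 + \tfrac{\pi}{2}\,k^2/r^2 \ge \tfrac{\pi}{2}\kappa + \tfrac{\pi}{2}\,k^2/r^2$, valid whenever $k^2/r^2 \ge \kappa$, which yields the pointwise bound $g_r(k) = e^{-\pi k^2/r^2} \le e^{-\pi\kappa/2}\, e^{-\pi k^2/(2r^2)} = e^{-\pi\kappa/2}\, g_{\sqrt{2}\,r}(k)$. Summing over $|k| > N$ and then enlarging the index set to all of $\Z$ only increases the sum, so
\[ g_r(\Z \setminus [-N, N]) \;\le\; e^{-\pi\kappa/2} \sum_{k \in \Z} g_{\sqrt{2}\,r}(k) \;=\; e^{-\pi\kappa/2}\, g_{\sqrt{2}\,r}(\Z). \]

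It remains to bound $g_{\sqrt{2}\,r}(\Z)$ by a constant multiple of $g_r(\Z)$, \emph{uniformly in $r$} — this uniformity is the one point that needs care, since one cannot simply lower-bound $g_r(\Z)$ by $g_r(0) = 1$ (for large $r$ both sides grow like $r$). Here I would invoke Poisson summation (the $n = 1$, $L = \Z$ instance of Theorem \ref{thm:poisson-sum}), which gives $g_{cr}(\Z) = c r \sum_{m \in \Z} e^{-\pi m^2 c^2 r^2}$ for any $c > 0$; for $c \ge 1$ every term of this series is dominated by the corresponding term of $g_r(\Z) = r \sum_{m \in \Z} e^{-\pi m^2 r^2}$, whence $g_{cr}(\Z) \le c\, g_r(\Z)$. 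Taking $c = \sqrt{2}$ and combining with the previous display gives $g_r(\Z \setminus [-\sqrt{\kappa}\,r, \sqrt{\kappa}\,r]) \le \sqrt{2}\, e^{-\pi\kappa/2}\, g_r(\Z) = 2^{-\Omega(\kappa)}\, g_r(\Z)$, as claimed. An alternative to the Poisson step that avoids the theta-function identity is to bound $g_{\sqrt{2}\,r}(\Z) \le 1 + \int_{\R} e^{-\pi x^2/(2r^2)}\,dx = 1 + \sqrt{2}\,r$ by comparing the sum with an integral, together with $g_r(\Z) \ge \max(1, r)$ (the bound $g_r(\Z) \ge r$ still coming from Poisson); this gives the same conclusion up to the value of the constant. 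Either way the estimate is entirely routine — there is no genuine obstacle, only the bookkeeping needed to keep the bound independent of $r$.
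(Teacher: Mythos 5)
Your proof is correct. Note that the paper does not actually prove this lemma---it imports it verbatim as Lemma~1 of \cite{brakerski2018learning}---so there is no in-paper argument to compare against; what you have written is a sound, self-contained replacement. Both steps check out: the pointwise splitting $e^{-\pi k^2/r^2} \le e^{-\pi\kappa/2}\,e^{-\pi k^2/(2r^2)}$ on the tail (valid exactly where $k^2/r^2 \ge \kappa$), and the Poisson-summation comparison $g_{cr}(\Z) = cr\sum_m e^{-\pi c^2r^2m^2} \le c\,g_r(\Z)$ for $c\ge 1$, which is precisely the point that needs care: you correctly identify that the naive lower bound $g_r(\Z)\ge g_r(0)=1$ is not enough uniformly in $r$, and your integral-comparison alternative ($g_{\sqrt2 r}(\Z)\le 1+\sqrt2 r$ against $g_r(\Z)\ge\max(1,r)$) is an equally valid fallback. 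The residual prefactor $\sqrt2\,e^{-\pi\kappa/2}$ is indeed $2^{-\Omega(\kappa)}$ (for small $\kappa$ the claim is vacuous since the ratio of the two sides is trivially at most $1$). This is the standard Banaszczyk-style one-dimensional tail argument and matches the spirit of the proof in the cited source.
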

\begin{corollary}
    \label{cor:gaus-apprx}
    Let $\kappa, r > 0$ and let $A = [-\sqrt{\kappa}r, \sqrt{\kappa}r]$. Let $\{ \zeta_{k, j} \}_{(k, j) \in \Z_n \times \Z}$ be complex numbers on the unit circle. For the quantum states  
    \[ \ket{\psi_A} = \frac{1}{S} \sum_{k = 1}^n \sum_{j \in A} \zeta_{k, j} g_r(j) \ket{k}, \quad \ket{\psi_\Z} = \frac{1}{T}\sum_{k = 1}^n \sum_{j \in \Z} \zeta_{k, j} g_r(j) \ket{k} \]
    we have $\opnorm{\ket{\psi_A}\bra{\psi_A} - \ket{\psi_\Z}\bra{\psi_\Z}}_1 \le 2^{-\Omega(\kappa)} T^{-1} g_r(\Z)$.
\end{corollary}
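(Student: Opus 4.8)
Since the $j$--sums in $\ket{\psi_A}$ and $\ket{\psi_\Z}$ carry no basis label, both states live in $\C^n$ with orthonormal basis $\{\ket k\}_{k=1}^n$ and coefficient vectors $\bm a=(c_1^A,\dots,c_n^A)$ and $\bm b=(c_1^\Z,\dots,c_n^\Z)$, where $c_k^A=\sum_{j\in A}\zeta_{k,j}g_r(j)$ and $c_k^\Z=\sum_{j\in\Z}\zeta_{k,j}g_r(j)$; by definition $S=\opnorm{\bm a}_2$ and $T=\opnorm{\bm b}_2$, so $\ket{\psi_A}$ and $\ket{\psi_\Z}$ are the unit vectors $\bm a/S$ and $\bm b/T$ expressed in this basis. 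The plan is to apply inequality~\eqref{equ:l1-trace} with $\bm u=\bm a/S$, $\bm v=\bm b/T$, which reduces the claim to bounding an $\ell_1$ distance of coefficient vectors:
\[ \opnorm{\ket{\psi_A}\bra{\psi_A}-\ket{\psi_\Z}\bra{\psi_\Z}}_1 \le \opnorm*{\frac{\bm a}{S}-\frac{\bm b}{T}}_1. \]

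\textbf{Bounding the tail.} Next I would estimate $\bm a-\bm b$ coordinatewise. For a fixed $k$, the $k$-th coordinate of $\bm a-\bm b$ is $-\sum_{j\in\Z\setminus A}\zeta_{k,j}g_r(j)$, which is exactly $-\tilde g_r(\Z\setminus[-\sqrt\kappa r,\sqrt\kappa r])$ in the notation of Lemma~\ref{lem:trfr-bound-c}, taking the unit-circle numbers in that lemma to be $\{\zeta_{k,j}\}_{j\in\Z}$. Hence Lemma~\ref{lem:trfr-bound-c} gives $\abs{c_k^A-c_k^\Z}\le 2^{-\Omega(\kappa)}g_r(\Z)$ for every $k$, so $\opnorm{\bm a-\bm b}_\infty\le 2^{-\Omega(\kappa)}g_r(\Z)$ and therefore $\opnorm{\bm a-\bm b}_1\le n\,2^{-\Omega(\kappa)}g_r(\Z)$ and $\opnorm{\bm a-\bm b}_2\le \sqrt n\,2^{-\Omega(\kappa)}g_r(\Z)$.

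\textbf{Handling the two normalizations and conclusion.} Write $\frac{\bm a}{S}-\frac{\bm b}{T}=\frac1S(\bm a-\bm b)+\bigl(\frac1S-\frac1T\bigr)\bm b$ and apply the triangle inequality, using $\bigl\lvert\frac1S-\frac1T\bigr\rvert=\frac{\abs{S-T}}{ST}\le\frac{\opnorm{\bm a-\bm b}_2}{ST}$ (reverse triangle inequality for $\opnorm{\cdot}_2$) and $\opnorm{\bm b}_1\le\sqrt n\,\opnorm{\bm b}_2=\sqrt n\,T$, to get
\[ \opnorm*{\frac{\bm a}{S}-\frac{\bm b}{T}}_1 \le \frac{\opnorm{\bm a-\bm b}_1}{S}+\frac{\sqrt n\,\opnorm{\bm a-\bm b}_2}{S} \le \frac{2n}{S}\,2^{-\Omega(\kappa)}g_r(\Z). \]
Finally, $S\ge T-\opnorm{\bm a-\bm b}_2$, so either the error term is at most $T/2$ and then $S\ge T/2$, giving the bound $\tfrac{4n}{T}2^{-\Omega(\kappa)}g_r(\Z)$, or it exceeds $T/2$ and the claimed estimate $2^{-\Omega(\kappa)}T^{-1}g_r(\Z)$ is already large and holds trivially (the trace norm being at most $2$). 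The polynomial prefactor $n$ is dominated by $2^{-\Omega(\kappa)}$ in every application of the corollary, where $\kappa$ is polynomially large in the ambient dimension, so the statement follows. The only genuinely delicate part is exactly this last step: one must verify that the two normalizers $S$ and $T$ differ by only an exponentially small relative amount before dividing, and dispose of the degenerate regime in which phase cancellations make $T$ atypically small.
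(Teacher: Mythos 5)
Your proof follows exactly the route the paper indicates for this corollary --- inequality \eqref{equ:l1-trace} plus Lemma \ref{lem:trfr-bound-c} plus the ``straightforward calculation'' --- and the calculation you supply is correct; in particular your explicit treatment of the two normalizers $S$ and $T$ is precisely the part the paper leaves implicit. The only substantive remark is that your honest bound carries a dimension factor $n$ (from summing $n$ coordinatewise tail bounds in $\ell_1$) that the stated bound $2^{-\Omega(\kappa)}T^{-1}g_r(\Z)$ suppresses; as you note, this factor can be absorbed into $2^{-\Omega(\kappa)}$ only when $\kappa = \Omega(\log n)$, which does hold in the paper's application (there $\kappa = \Theta(n)$) but is not among the corollary's hypotheses, so your version is arguably the more accurate statement.
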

\begin{proof}
    This follows from \eqref{equ:l1-trace}, Lemma \ref{lem:trfr-bound-c} and a straightforward calculation.
\end{proof}

\end{document}